\documentclass[twocolumn]{autart}

\usepackage{graphicx}
\usepackage{amsmath,amssymb,mathtools}

\usepackage{cite}
\usepackage{tikz}
\usepackage{url}

\newenvironment{proof}{\begin{pf}}{\end{pf}}

\newtheorem{proposition}{Proposition}[section]
\newtheorem{lemma}{Lemma}[section]
\newtheorem{theorem}{Theorem}[section]
\newtheorem{definition}{Definition}[section]
\newtheorem{remark}{Remark}[section]
\newtheorem{assumption}{Assumption}[section]
\DeclareMathOperator{\col}{col}
\newcommand{\R}{\mathbb{R}}
\newcommand{\E}{\mathbb{E}}
\newcommand{\1}{\mathbf{1}}
\newcommand{\diag}{\operatorname{diag}}
\newcommand{\norm}[1]{\left\lVert #1 \right\rVert}

\begin{document}

\begin{frontmatter}

\title{Delay-Robust Secondary Frequency Control via Passive Interconnection and Randomized Block Updates\thanksref{fund}}

\thanks[fund]{This work was supported in part by the National Natural Science Foundation of China under Grant 52307145, and in part by the Shenzhen Research Institute of Big Data (SRIBD) under Grant J00220250001.}

\author[aff1]{Yiwei Liu}
\author[aff2]{Luwei Yang\corauthref{cor}}
\corauth[cor]{Corresponding author: Luwei Yang.}
\author[aff1,aff2]{Shunbo Lei}

\address[aff1]{The Chinese University of Hong Kong, Shenzhen, China}
\address[aff2]{Shenzhen Research Institute of Big Data, China}

\begin{abstract}
This paper studies secondary frequency control in transmission networks subject to communication delays at the cyber-physical interface and limited
per-update computation at the control center. The regulation objective is formulated as a constrained economic dispatch problem incorporating generation capacity constraints, nodal power balance, transmission-flow limits, and scheduled tie-line power exchanges. Based on this formulation, we develop a passivity-based control framework in which an augmented projected primal-dual controller restores nominal frequency and drives the closed-loop system to the solution set of the constrained economic dispatch problem. Two-way communication delays between
the physical network and the control center are modeled as scattering-based
passive channels for the measurement uplink and the control-command downlink. This construction preserves the target equilibrium and enables a delay-robust passivity analysis of the delayed closed loop. To reduce the computational burden at the control center, we develop a randomized block-coordinate implementation of the augmented projected primal-dual controller. The resulting sampled-data closed loop preserves the target
solution set and achieves local mean-square geometric convergence under
suitable step-size and regularity conditions. Finally, a multivariable wave-domain interface filter is introduced to inject additional
dissipation and improve the damping of the delayed interface without altering the steady-state interconnection. Simulations on the IEEE 14-bus system indicate that the proposed digital implementation accurately reproduces the delayed closed-loop behavior while reducing the per-update computational cost.
\end{abstract}

\begin{keyword}
Frequency control; passive interconnection; projected primal-dual; randomized block-coordinate updates.
\end{keyword}

\end{frontmatter}

\section{Introduction}
\label{sec:introduction}

Secondary frequency control is a fundamental ancillary-service mechanism in power systems and aims to restore the system frequency to its nominal value after supply-demand imbalances while economically allocating regulation effort  and respecting operational constraints \cite{andersonPowerSystemControl2002}. This task has become more challenging with the increasing penetration of renewable generation, whose variability leads to larger and more frequent active-power fluctuations \cite{malladaOptimalLoadSideControl2017}. Hence, there is a growing need for frequency control schemes that combine reliable dynamic performance, systematic constraint enforcement, and real-time computational tractability.

From a control and optimization perspective, secondary frequency regulation in recent years has been typically formulated as a dynamic realization of a constrained~economic dispatch problem interconnected with the physical network dynamics \cite{andersonPowerSystemControl2002}.  Existing control architectures can be broadly categorized as distributed, centralized, and hybrid \cite{li2015connecting,zhang2013real,xi2018power,weitenbergExponentialConvergenceDistributed2018,wangDistributedFrequencyControl2019,yang2020distributed,DORFLER2017296}. Distributed schemes are attractive due to their scalability and reduced reliance on a central coordinator. However, when system-wide constraints such as scheduled tie-line power exchanges and transmission-flow limits are imposed, they often require substantial coordination and may become sensitive to communication delays, asynchronous updates, and network imperfections. Hybrid architectures can mitigate some of these difficulties, but at the cost of additional implementation complexity \cite{DORFLER2017296}. By contrast, centralized architectures are naturally suited to global coordination and the direct enforcement of network-wide constraints, and are~consistent with the energy management infrastructures currently prevalent in practical power systems \cite{li2015connecting,zhang2013real,xi2018power}.

Under a centralized architecture, the control center interacts with the physical grid through measurement uplink and control-command downlink channels, thereby forming a closed-loop cyber-physical system. A classical realization of this paradigm is automatic generation control (AGC), which can be interpreted as a real-time dispatch mechanism that implicitly solves a simplified optimization problem primarily aimed at restoring global power balance \cite{hauswirthOptimizationAlgorithmsRobust2024a}. Although AGC has been highly successful in practice, it does not provide a systematic way to enforce richer operational requirements, such as generation capacity constraints, scheduled tie-line exchanges, and transmission congestion limits.

Primal-dual methods provide a natural framework~for~incorporating the above constraints into frequency control laws \cite{malladaOptimalLoadSideControl2017,zhaoDesignStabilityLoadSide2014}. Under ideal communication assumptions, primal-dual dynamics can be designed so that the closed-loop equilibria satisfy the optimality conditions of the underlying dispatch problem and the resulting trajectories converge to the desired operating points. In practical centralized implementations, however, the measurement uplink and control-command downlink are subject to non-negligible communication delays \cite{yu2004lmi,jiang2012delay,zhang2015measurement}. These delays are not merely implementation artifacts, rather they enter the feedback path between the optimizer and the physical network, and may destroy the structural properties on which delay-free primal-dual designs rely. A delay-aware closed-loop design is therefore required.

A further difficulty arises from real-time computation. In large-scale networks, a full primal-dual update involves high-dimensional variables and network matrices, which can impose a substantial real-time computational burden on the control center. Randomized block-coordinate (RBC) updates offer a possible remedy, since only a subset of the controller  variables is updated at each sampling instant \cite{Nesterov2012EfficiencyCD,RichtarikTakac2014IterationComplexityRBCD,LiuWright2015AsynchronousSCD,ZhangXiao2017SPDC}. Existing results on randomized coordinate and asynchronous methods, however, are primarily developed for static large-scale optimization problems, where the focus is on iteration complexity and computational efficiency. In secondary frequency control, the optimization dynamics are embedded in a feedback loop with the physical power network. It is therefore necessary to understand whether randomized partial updates remain compatible with closed-loop stability and convergence, especially in the presence of cyber–physical interface delays.

Although several closely related research directions have been pursued, to the best of our knowledge, the above
issues have not been systematically addressed in a unified framework. First, delay-robust analyses of primal-dual dynamics have been developed for constrained~optimization \cite{wang2003globaldelaypdcc,yuan2016regularizedpdsubgradient,hale2017asyncmultiagentpd,millar2017smartgridasyncpd,hendrickson2023totallyasyncpd,su2025pdfixedpointdelays,sen2026delayrobustpdgd}. Most of these results, however, are formulated at the
optimization-algorithm level and are often tailored to relatively simple
constraint structures, such as equality constraints or inequality constraints
considered in isolation. This limits their direct applicability to real-time frequency control, where nodal power-balance constraints, transmission-flow limits, scheduled tie-line constraints,  and actuator constraints must be enforced simultaneously within a closed-loop cyber-physical system. Second, feedback optimization provides a general methodology for interconnecting a stable plant with an optimization layer \cite{hauswirthOptimizationAlgorithmsRobust2024a,Hauswirth2021TimescaleSeparation,He2024ModelFreeNonlinearFO}. Nevertheless, feedback-optimization methods typically
rely on sensitivity information associated with the plant steady-state map. In power-system applications, such sensitivities may be difficult to attain accurately, and data-driven approximations may introduce steady-state optimality errors. Moreover, existing feedback-optimization approaches rarely address communication delays across the cyber-physical interface, making their direct use in delayed frequency-regulation architectures nontrivial. Third, stochastic-gradient and coordinate-based methods offer an effective route to improving computational efficiency in large-scale optimization \cite{Nesterov2012EfficiencyCD,ZhangXiao2017SPDC,LatafatFrerisPatrinos2019TriPD}. Their convergence and complexity properties have been extensively studied for algorithms evolving over prescribed
information structures. These results, however, do not account for the dynamics of the underlying power network and therefore cannot be directly used to justify stochastic or coordinate-reduced implementations of delay-robust cyber-physical frequency control.

Motivated by these observations, this paper develops a delay-aware framework for secondary frequency control with a computationally efficient RBC implementation. The main contributions are summarized below. First, we
propose a centralized augmented projected primal–dual controller for frequency regulation under heterogeneous operational constraints, including generation capacity constraints, scheduled tie-line exchange constraints, and transmission congestion constraints, thus yielding a unified closed-loop treatment of constraints that are often handled only partially in existing designs. Second, the delayed controller-plant interconnection is reformulated through passive communication channels. This representation explicitly captures both measurement and actuation delays, and yields a tractable passivity structure for delay-robust
 analysis. Third, a sampled-data RBC realization of the proposed controller is developed to reduce the per-step computational burden at the central optimizer. Under suitable regularity and step-size conditions, the system preserves the desired steady-state optimality properties and achieves local mean-square geometric convergence. Finally, we introduce a multivariable wave-domain
interface filter as an interface-level refinement. The filter adds dissipation
to the delayed interconnection without changing its steady-state input-output
relation. To the best of our knowledge, this is among the first results that provide a rigorous stability analysis for randomized optimizer
updates embedded in a delayed cyber-physical frequency-control loop.

The rest of this paper is organized as follows. Section~\ref{sec:notations} gives the notations and preliminaries. Section~\ref{sec:problem} formulates the problem. The projected primal-dual controller and delay-aware passivity analysis are presented in Section~\ref{sec:controller}. Section~\ref{sec:rbc} gives the RBC implementation and its convergence analysis. Section~\ref{sec:filtering} introduces the passivity-preserving wave-domain interface filtering. Section~\ref{sec:verification} reports simulation results. Section~\ref{sec:conclusion} concludes the paper.

\section{Notations and preliminaries}
\label{sec:notations}

\subsection{Mathematical notation}

Let \(\mathbb{R}\), \(\mathbb{R}_{\geq 0}\) denote the sets of real and nonnegative real numbers, respectively. For positive integers \(m,n\), \(\mathbb{R}^n\) denotes the set of \(n\)-dimensional real vectors, and \(\mathbb{R}^{m\times n}\) denotes the set of \(m\times n\) real matrices. For a vector \(x\in\mathbb{R}^n\), \(x_i\) denotes its \(i\)th entry. The transpose is denoted by \((\cdot)^\top\). The operators \(\diag(\cdot)\) and \(\col(\cdot)\) denote diagonalization and column stacking, respectively. The all-one vector in \(\mathbb{R}^n\) is denoted by \(\mathbf{1}_n\), the identity matrix in \(\mathbb{R}^{n\times n}\) is denoted by \(I_n\), and \(\mathbf{1}_n^\perp := \{x\in\mathbb{R}^n \mid \mathbf{1}_n^\top x = 0\}\) denotes its orthogonal complement. For a set \(\mathcal S\), \(|\mathcal S|\) denotes its cardinality, and \(\operatorname{cl}(\mathcal S)\) denotes its closure, i.e., the smallest closed set containing \(\mathcal S\). The Euclidean norm and inner product are denoted by \(\|\cdot\|\) and \(\langle \cdot,\cdot\rangle\), respectively.  For a symmetric matrix \(M\), the notation \(M\succ 0\) means that \(M\) is positive definite. Unless otherwise specified, vector equalities and inequalities are understood componentwise.

Let \(\Omega\subseteq\mathbb{R}^n\) be a nonempty closed convex set. For any \(x\in\Omega\), the normal cone and tangent cone of \(\Omega\) at \(x\) are defined as \(\mathcal{C}_{\Omega}(x) := \{y\in\mathbb{R}^n \mid \langle y,z-x\rangle \le 0,\ \forall z\in\Omega\}\), and \(T_\Omega(x) := \operatorname{cl}\{\alpha(z-x)\mid \alpha\ge 0,\ z\in\Omega\}\), respectively. For any \(y\in\mathbb{R}^n\), the Euclidean projection of \(y\) onto \(\Omega\) is denoted by \(\mathcal P_{\Omega}(y) := \arg\min_{z\in\Omega}\|z-y\|\). For any \(x\in\Omega\) and any \(y\in\mathbb{R}^n\), the projected velocity is defined by \(\Pi_\Omega(x,y) := \mathcal P_{T_\Omega(x)}(y)\). 

\subsection{Preliminaries}

The controller to be designed uses projection operators to enforce inequality constraints. We therefore recall two standard facts on projected dynamical systems, which
will be used in the subsequent theoretical analysis. The first lemma characterizes projected equilibria through the normal cone 
\cite{ZhangWeiYiHu2018ProjectedPDAL,BauschkeCombettes2017}.

\begin{lemma}[Projected equilibrium]
\label{lem:proj-equilibrium}
Let \(\Omega\subset\mathbb{R}^n\) be a nonempty closed convex set. Then, for any
\(x\in\Omega\) and \(v\in\mathbb{R}^n\), \(\Pi_\Omega(x,v)=0\) if and only
if \(v\in\mathcal C_\Omega(x)\). Equivalently,
\(\langle v,z-x\rangle\le 0\) for all \(z\in\Omega\).
\end{lemma}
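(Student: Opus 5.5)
The plan is to reduce the statement to the Moreau decomposition with respect to the closed convex cone \(T:=T_\Omega(x)\). The key fact is that the polar cone of \(T_\Omega(x)\) is \(\mathcal C_\Omega(x)\).

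\textbf{Step 1: polarity.} First show \(T^\circ=\mathcal C_\Omega(x)\), where \(T^\circ:=\{y\mid\langle y,d\rangle\le0\ \forall d\in T\}\).
\begin{itemize}
\item If \(y\in\mathcal C_\Omega(x)\), then \(\langle y,\alpha(z-x)\rangle\le0\) for every \(\alpha\ge0\) and \(z\in\Omega\). By continuity of the inner product, this inequality passes to the closure, so \(y\in T^\circ\).
\item If \(y\in T^\circ\), then choosing \(d=z-x\) (that is, \(\alpha=1\)) gives \(\langle y,z-x\rangle\le0\) for all \(z\in\Omega\), so \(y\in\mathcal C_\Omega(x)\).
\end{itemize}
Note that \(T\) is a nonempty closed convex cone. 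It is the closure of the conic hull of the convex set \(\Omega-x\), and closures of convex cones are convex. Hence the projection \(\mathcal P_T\) is well defined.

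\textbf{Step 2: projection onto a cone.} Next use the variational characterization of projection onto a closed convex cone: \(p=\mathcal P_T(v)\) if and only if \(p\in T\), \(v-p\in T^\circ\), and \(\langle v-p,p\rangle=0\). This follows from the obtuse-angle criterion \(\langle v-p,d-p\rangle\le0\) for all \(d\in T\), tested with \(d=0\) and \(d=2p\).

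Taking \(p=0\), the conditions become \(0\in T\) (true) and \(v\in T^\circ\). Therefore \(\Pi_\Omega(x,v)=\mathcal P_T(v)=0\) if and only if \(v\in T^\circ\), and by Step 1 this holds if and only if \(v\in\mathcal C_\Omega(x)\).

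\textbf{Step 3: the equivalent form.} The final reformulation, \(\langle v,z-x\rangle\le0\) for all \(z\in\Omega\), is exactly the definition of \(\mathcal C_\Omega(x)\), so nothing further is needed.

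\textbf{Expected obstacle.} The only delicate point is Step 1, namely that taking the closure in the definition of \(T_\Omega(x)\) does not enlarge the polar. This is handled by the continuity argument above. Otherwise the proof is standard, as in \cite{BauschkeCombettes2017}.
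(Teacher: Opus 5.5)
Your proof is correct, and it is the standard argument the paper relies on. The paper gives no proof of its own and only cites \cite{ZhangWeiYiHu2018ProjectedPDAL,BauschkeCombettes2017}, where the lemma follows from the polarity \(T_\Omega(x)^\circ=\mathcal C_\Omega(x)\) together with the variational characterization of projection onto a closed convex cone. In Step 2 you could shorten things slightly: with \(p=0\), the obtuse-angle criterion already reads \(\langle v,d\rangle\le 0\) for all \(d\in T_\Omega(x)\), i.e., \(v\in T_\Omega(x)^\circ\), so the full Moreau characterization is not needed.
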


The second lemma give the nonexpansiveness-type property of projected velocities, which is a projection inequality that follows
from the variational characterization of Euclidean projections
\cite{BauschkeCombettes2017}.

\begin{lemma}[Projection inequality]
\label{lem:proj-nonexpansive}
Let \(\Omega\subset\mathbb{R}^n\) be a nonempty closed convex set. Then, for
any \(x,z\in\Omega\) and \(\xi\in\mathbb{R}^n\), \(\langle x-z,\Pi_\Omega(x,\xi)\rangle \le \langle x-z,\xi\rangle\).
\end{lemma}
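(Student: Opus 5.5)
The inequality in Lemma~\ref{lem:proj-nonexpansive} follows from the Moreau decomposition of \(\xi\) with respect to the tangent cone \(T:=T_\Omega(x)\). Since \(T\) is a nonempty closed convex cone, the projection \(\Pi_\Omega(x,\xi)=\mathcal P_T(\xi)\) is well defined. Its polar cone is \(T^\circ=\mathcal C_\Omega(x)\): the normal cone is exactly the set of vectors making nonpositive inner product with every \(z-x\), \(z\in\Omega\), and taking nonnegative multiples and closure does not change the polar. Moreau's theorem then gives
\[
\xi=\mathcal P_T(\xi)+\mathcal P_{T^\circ}(\xi),\qquad \langle \mathcal P_T(\xi),\mathcal P_{T^\circ}(\xi)\rangle=0 .
\]
If one prefers not to quote Moreau, the same two facts can be obtained directly. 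The variational characterization of projection onto a closed convex cone states that \(\eta:=\xi-\mathcal P_T(\xi)\) satisfies \(\langle \eta,w-\mathcal P_T(\xi)\rangle\le 0\) for all \(w\in T\). Choosing \(w=0\) and \(w=2\mathcal P_T(\xi)\) yields \(\langle\eta,\mathcal P_T(\xi)\rangle=0\), and then \(\langle \eta,w\rangle\le 0\) for all \(w\in T\), i.e., \(\eta\in T^\circ\).

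With this decomposition in hand, write \(\xi=\Pi_\Omega(x,\xi)+\eta\) with \(\eta\in\mathcal C_\Omega(x)\). Then
\[
\langle x-z,\xi\rangle-\langle x-z,\Pi_\Omega(x,\xi)\rangle=\langle x-z,\eta\rangle=-\langle \eta,z-x\rangle\ge 0,
\]
where the last inequality is the definition of \(\eta\in\mathcal C_\Omega(x)\) applied at \(z\in\Omega\). This is precisely the claimed inequality. Only \(z\in\Omega\) is needed; no property of \(\xi\) beyond membership in \(\mathbb{R}^n\) is used.

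The only real obstacle is verifying \(\eta\in\mathcal C_\Omega(x)\), that is, identifying the polar of the tangent cone with the normal cone. It suffices to check \(\langle\eta,z-x\rangle\le 0\) for \(z\in\Omega\). This is immediate because \(z-x\in T_\Omega(x)\) (take \(\alpha=1\) in the definition), and \(\eta\) is nonpositive on all of \(T\) by the previous step. With this observation the argument is complete, and it is consistent with Lemma~\ref{lem:proj-equilibrium}: if \(\xi\in\mathcal C_\Omega(x)\), then \(\Pi_\Omega(x,\xi)=0\), \(\eta=\xi\), and the inequality reduces to the normal-cone definition itself.
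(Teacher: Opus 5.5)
Your argument is correct and follows essentially the paper's route. The paper gives no proof beyond citing the variational characterization of Euclidean projections, and your step showing that the residual \(\xi-\Pi_\Omega(x,\xi)\) is nonpositive on \(T_\Omega(x)\ni z-x\) is exactly that characterization applied to projection onto the tangent cone (the full identification of the polar of \(T_\Omega(x)\) with \(\mathcal C_\Omega(x)\) is not even needed).
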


Finally, since the subsequent stability analysis is carried out in incremental coordinates around a steady-state operating point, we recall the notion of incremental passivity adopted in this paper \cite{7912328}. Particularly, when no ambiguity
can arise, we simply write passivity for incremental passivity.

\begin{definition}[Incremental passivity]
Consider a dynamical system with state \(x\), input \(u\), and output \(y\). Let \((x^*,u^*,y^*)\) be an equilibrium triplet. The system is said to be incrementally passive with respect to \((x^*,u^*,y^*)\) if there exists a nonnegative storage function \(S(x,x^*)\) such that, along its trajectories, \(\dot S \le (y-y^*)^\top (u-u^*)\). It is said to be output strictly incrementally passive if there exists \(\alpha>0\) such that \(\dot S \le (y-y^*)^\top (u-u^*) - \alpha \|y-y^*\|^2\).
\end{definition}

\section{Problem Formulation}
\label{sec:problem}

\subsection{Network Plant Dynamics}

Consider a connected transmission network whose topology is described by the graph
\(\mathcal{G}=(\mathcal{N},\mathcal{E})\), where \(n:=|\mathcal{N}|\) is the
number of buses and \(m:=|\mathcal{E}|\) is the number of transmission lines.
The network is partitioned into \(N_a\) control areas indexed by
\(\mathcal{A}:=\{1,\dots,N_a\}\). A line \(\ell\in\mathcal{E}\) is called a tie-line if its two incident buses
belong to different control areas. After fixing an arbitrary orientation for
all lines, let \(C\in\mathbb{R}^{n\times m}\) denote the associated incidence
matrix, and let \(B\) be the diagonal matrix of branch
susceptances. The corresponding weighted Laplacian is then given by
\(L:=CBC^\top\). Moreover, for each ordered area pair \((a,b)\) with
\(a,b\in\mathcal{A}\) and \(a\neq b\), let \(\mathcal{E}_{ab}\subseteq\mathcal{E}\)
denote the set of tie-lines directed from area \(a\) to area \(b\) under the
prescribed orientation. Let
\(\mathcal I_{\mathrm{tie}}\) denote a prescribed collection of ordered area pairs
\((a,b)\) with
\(\mathcal E_{ab}\neq\emptyset\), and let \(n_t:=|\mathcal I_{\mathrm{tie}}|\).

We model the physical power network by the classical linearized swing dynamic (e.g., \cite{andersonPowerSystemControl2002,malladaOptimalLoadSideControl2017,zhaoDesignStabilityLoadSide2014}). Let~\(\theta,\omega,d\in\mathbb{R}^n\) denote, respectively, the vectors of
bus phase angles, frequency deviations, and net demands, where \(d\) is defined as the local load minus the local renewable generation. Let \(n_g\) denote the number of controllable generators, let \(p\in\mathbb{R}^{n_g}\) be the corresponding generator-level control input, and let \(G\in\mathbb{R}^{n\times n_g}\) denote the generator-to-bus selection matrix so that \(G p\in\mathbb{R}^n\) is the nodal controllable injection vector. Such generator-to-bus incidence matrices are standard in direct-current optimal power flow (DC-OPF) and dispatch formulations \cite{FrankRebennack2016,zimmermanMATPOWERSteadyStateOperations2011}. The network dynamics are
\begin{subequations}\label{eq:swing}
\begin{align}
\dot{\theta} &= \omega, \label{eq:swing-theta}\\
M\dot{\omega} &= -D\omega - L\theta + G p - d, \label{eq:swing-omega}
\end{align}
\end{subequations}
where \(M\succ0\) and \(D\succ0\) are the diagonal inertia and damping
matrices, respectively.

At a synchronous steady state with restored nominal frequency, one has
\(\omega^*=0\) and \(\dot\omega=0\). Hence, the steady-state variables
satisfy \(L\theta^* = G p^* - d\).
Thus, at equilibrium, the swing dynamics reduce to a nodal supply-demand balance relation, which motivates the steady-state dispatch problem to be introduced.

\begin{remark}
The model \eqref{eq:swing} is given for a reduced network whose retained
buses are equipped with frequency dynamics.
If the network contains pure load buses, then the physical plant model is more naturally described by a differential-algebraic system. In this case, the
pure load buses may be eliminated via Kron reduction
\cite{dorfler2013kron}, and the
reduced network again admits a swing-dynamics representation of 
\eqref{eq:swing}, with suitably modified network parameters. Accordingly, when
pure load buses are present, the communication graph considered in this paper
should be interpreted as the graph of the Kron-reduced network.
\end{remark}

\begin{remark}
The model \eqref{eq:swing} is a standard approximation in
frequency regulation studies, because it captures the dominant transmission-grid frequency
dynamics with relatively low complexity. The proposed framework is, however,
not restricted to this model. Indeed, the subsequent analysis relies mainly
on incremental passivity and passivity-preserving interconnections. Therefore, the same construction can be extended to transmission-grid models
augmented with turbine-governor dynamics, provided that the corresponding
input-output subsystem is incrementally passive
\cite{yang2020distributed,10319778}. This extension is immediate for
first-order turbine-governor dynamics due to its inherent incremental passivity. For second-order dynamics,
incremental passivity can likewise be established using the arguments in
\cite{7912328}. Hence, the proposed framework applies to a broader class of
transmission-grid models.
\end{remark}

\begin{remark}
The DC power-flow model has been extensively used in frequency-control studies, as it
provides a tractable approximation of active-power transfer around nominal
operating conditions \cite{malladaOptimalLoadSideControl2017}. However, the proposed
framework can, in principle, be extended to more detailed alternating-current (AC) power-flow models with sine-type terms. Such an extension would require a more elaborate Lyapunov construction to handle the resulting nonlinear network coupling by replacing the quadratic DC network energy, to be introduced in
Section~\ref{Sec5}, with  an
incremental energy function associated with the nonlinear AC
potential, so that the sine-type network coupling terms can be absorbed in the
Lyapunov derivative \cite{yang2020distributed,10319778}. To keep the analysis
focused, this paper adopts the DC power-flow model for simplicity.
\end{remark}

\subsection{Steady-State Operational Objectives}

Guided by the steady-state relation of \eqref{eq:swing}, we formulate an
optimization problem whose solutions represent the desired synchronous
operating points of the physical network. Introduce two optimizer-side decision
variables: the controllable generator-dispatch vector
\(u\in\mathbb{R}^{n_g}\) and the associated virtual phase-angle vector
\(\phi\in\mathbf{1}_n^\perp\). Here, \(u\) denotes the target steady-state power
dispatch to be realized by the controller, while \(G u\) is the corresponding nodal controllable injection. The variable \(\phi\)
parameterizes the equilibrium phase-angle profile and is used to encode nodal
power balance, scheduled inter-area exchanges, and line-flow constraints.

To express the scheduled inter-area exchange compactly, define
\(T\in\mathbb{R}^{n_t\times m}\) as the tie-line aggregation matrix. Its row
indexed by \((a,b)\in\mathcal I_{\mathrm{tie}}\) is specified by
\(T_{(a,b),\ell}=1\) if \(\ell\in\mathcal E_{ab}\), and
\(T_{(a,b),\ell}=0\) otherwise.
Then, \(TBC^\top\phi\in\mathbb{R}^{n_t}\) collects the total directed
tie-line flows associated with the ordered area pairs in \(\mathcal I_{\mathrm{tie}}\). Let
\(P^{\mathrm{sch}}\in\mathbb{R}^{n_t}\) denote the corresponding vector of
scheduled inter-area exchanges. The desired steady-state operating point is characterized by the constrained economic dispatch problem:
\begin{subequations}\label{eq:opt}
\begin{align}
\min_{u\in\R^{n_g},\ \phi\in\1_n^\perp}\quad
&J(u)\notag\\
\text{s.t.}\quad
&r(u,\phi):=G u-d-L\phi=0, \label{eq:pb}\\
&g(\phi):=TBC^\top\phi-P^{\mathrm{sch}}=0, \label{eq:tl}\\
&h^+(\phi):=BC^\top\phi-\bar{f}\leq 0, \label{eq:lu}\\
&h^-(\phi):=\underline{f}-BC^\top\phi\leq 0, \label{eq:ld}\\
&u\in[\underline{u},\bar{u}], \label{eq:bc}
\end{align}
\end{subequations}
where \(r(u,\phi)=0\) enforces the nodal power balance at equilibrium,
\(g(\phi)=0\) enforces the prescribed inter-area tie-line exchanges, and
\(h^\pm(\phi)\le 0\) impose transmission-line
flow limits. Here, \(\underline f,\bar f\in\mathbb R^m\) are the lower and upper line-flow
bounds, respectively. The box constraint
\(\Omega:=\{u\in\mathbb{R}^{n_g}\mid \underline u\le u\le \bar u\}\)
specifies the admissible range of controllable power injections.

The formulation \eqref{eq:opt} is constructed so that its feasible solutions
correspond to physically realizable synchronous operating points of
\eqref{eq:swing}. Specifically, at steady state, the optimizer-side dispatch
variable \(u\) is identified with the actual plant input \(p\), while the
virtual angle variable \(\phi\) parameterizes the target physical phase-angle
profile \(\theta\), up to the chosen reference. Under these
identifications, the constraints in \eqref{eq:opt} coincide with the
steady-state equations and operational requirements of the physical network.
In particular, since \(L\mathbf{1}_n=0\), every
\(\phi\in\mathbf{1}_n^\perp\) satisfies
\(\mathbf{1}_n^\top L\phi=0\), and the nodal balance constraint
\(r(u,\phi)=0\) therefore implies
\(\mathbf{1}_n^\top(G u-d)=0\). Thus, the total power-balance condition is
already enforced by the nodal balance equation, and no additional
network-wide balance constraint is needed. Consequently,
every feasible solution of \eqref{eq:opt} yields a synchronous steady-state
operating point compatible with the plant dynamics.

\begin{remark} The objective \(J(u)\) denotes the aggregate generation redispatch cost associated with the target~controllable injections.
In secondary frequency control,  \(J(u)\) is commonly modeled as a strongly convex function, with quadratic costs being a standard choice \cite{malladaOptimalLoadSideControl2017,7912328,yang2020distributed}, which has both engineering and analytical motivations. From an
engineering viewpoint, strong convexity reflects the increasing marginal cost
of regulation effort. Whereas, from an analytical viewpoint, it improves the
well-posedness of the dispatch problem and facilitates the closed-loop
optimality and stability analysis. 
\end{remark}

To characterize the steady-state optimality conditions that the controller
should enforce, we impose the following standard convexity and feasibility
assumption.

\begin{assumption}
\label{ass:convex-feasible}
The objective $J:\R^{n_g}\to\R$ is twice continuously differentiable and
\(m_J\)-strongly convex on \(\Omega\), i.e., for any $u,v\in\Omega$
\begin{equation}
\bigl(\nabla J(u)-\nabla J(v)\bigr)^\top(u-v)\ge m_J\norm{u-v}^2,
\label{eq:strong-convexity}
\end{equation}
for some \(m_J>0\). Moreover, problem \eqref{eq:opt} is strictly feasible,
that is, there exists a point satisfying constraints \eqref{eq:pb}-\eqref{eq:bc} strictly. And at the KKT point, all inequality constraints \eqref{eq:lu}-\eqref{eq:bc}
satisfy strict complementarity.
\end{assumption}

\begin{remark}
The feasibility and complementarity conditions in
Assumption~\ref{ass:convex-feasible} are standard in power-system
optimization and control. Strict feasibility is a natural modeling
requirement; when no strict margin exists, selected limits are often relaxed
or modeled as soft constraints to recover an interior feasible operating
point \cite{malladaOptimalLoadSideControl2017}. Strict complementarity is a standard local nondegeneracy condition \cite{allibhoy2023control}. It
excludes weakly active inequality constraints, namely constraints that are
binding at the KKT point but have zero marginal value. This condition is
vacuous when all inequalities are inactive and can be verified a posteriori
at the computed KKT point.
\end{remark}
Under Assumption~\ref{ass:convex-feasible}, the dispatch problem is a convex
optimization problem for which the KKT conditions are necessary and sufficient.
The next lemma gives the corresponding normal-cone form. This form will be
used later to connect the equilibria of the projected primal--dual controller
with the optimal solutions of \eqref{eq:opt}.
\begin{lemma}[KKT characterization]
\label{prop:kkt} 
A pair
\((u^\star,\phi^\star)\) is optimal for \eqref{eq:opt} if and only if there
exist multipliers
\(\lambda^\star\in\R^n\),~\(\pi^\star\in\R^{n_t}\), and
\(\rho^{\pm\star}\in\R_{\ge 0}^m\) such that $(u^\star,\phi^\star,\lambda^\star,$ $\pi^\star,\rho^{\pm\star})$
satisfies
\begin{subequations}
\label{eq:kkt}
\begin{align}
&r(u^\star,\phi^\star)=0, \ 
g(\phi^\star)=0,\  h^\pm(\phi^\star)\leq 0, \ 
u^\star\in\Omega,\label{eq:pf}\\
&(\rho^{+\star})^\top h^+(\phi^\star)=0,\ (\rho^{-\star})^\top h^-(\phi^\star)=0\label{eq:cs}\\
&0\in \nabla J(u^\star)+G^\top\lambda^\star+\mathcal{C}_{\Omega}(u^\star),\label{eq:nc}\\
&0=-L\lambda^\star+CBT^\top\pi^\star+CB(\rho^{+\star}-\rho^{-\star}).\label{eq:mc}
\end{align}
\end{subequations}
\end{lemma}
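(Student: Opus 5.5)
We will prove Lemma~\ref{prop:kkt} by viewing \eqref{eq:opt} as a convex program over the set \(\Omega\times\mathbf{1}_n^\perp\), and then applying the standard Lagrangian/KKT theorem under a Slater-type qualification. Sufficiency will follow from convexity, and necessity from the constraint qualification.

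First we check convexity. The objective \(J\) is convex on \(\Omega\) by Assumption~\ref{ass:convex-feasible}. The constraints \(r\) and \(g\) are affine in \((u,\phi)\), and \(h^\pm\) are affine in \(\phi\). The set \(\Omega\times\mathbf{1}_n^\perp\) is a closed convex polyhedron. We form the partial Lagrangian, keeping the box \(u\in\Omega\) and the subspace \(\phi\in\mathbf{1}_n^\perp\) as abstract constraints:
\[
\mathcal L=J(u)+\lambda^\top(Gu-d-L\phi)+\pi^\top g(\phi)+(\rho^{+})^\top h^+(\phi)+(\rho^-)^\top h^-(\phi).
\]
Since every constraint is affine and the problem is strictly feasible, the relaxed (affine) Slater condition holds. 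Hence strong duality holds and a dual optimum is attained; in fact the linearity constraint qualification alone would suffice. The KKT conditions then read: primal feasibility \eqref{eq:pf}, complementary slackness \eqref{eq:cs}, \(\rho^{\pm\star}\ge0\), and
\[
0\in\nabla_{(u,\phi)}\mathcal L+\mathcal C_{\Omega\times\mathbf{1}_n^\perp}(u^\star,\phi^\star).
\]
The normal cone splits as \(\mathcal C_\Omega(u^\star)\times\operatorname{span}\{\mathbf{1}_n\}\). The \(u\)-block is exactly \eqref{eq:nc}. The \(\phi\)-block reads
\[
-L\lambda^\star+CBT^\top\pi^\star+CB(\rho^{+\star}-\rho^{-\star})=\alpha\mathbf{1}_n
\]
for some \(\alpha\in\R\), where we used the symmetry of \(L\) and \((BC^\top)^\top=CB\).

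The main obstacle is removing this scalar \(\alpha\) to obtain \eqref{eq:mc} exactly. We will do this by left-multiplying by \(\mathbf{1}_n^\top\). Because \(\mathbf{1}_n^\top L=0\) and \(\mathbf{1}_n^\top C=0\) (each column of an incidence matrix sums to zero), the left-hand side vanishes, so \(n\alpha=0\) and therefore \(\alpha=0\).

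Conversely, suppose a tuple satisfies \eqref{eq:kkt}. Then \((u^\star,\phi^\star)\) minimizes the convex function \(\mathcal L(\cdot,\cdot,\lambda^\star,\pi^\star,\rho^{\pm\star})\) over \(\Omega\times\mathbf{1}_n^\perp\), because the stationarity inclusion holds with \(\alpha=0\). For any feasible \((u,\phi)\), the dual terms are nonpositive since \(\rho^{\pm\star}\ge0\) and \(h^\pm\le0\), and they vanish at \((u^\star,\phi^\star)\) by \eqref{eq:cs}. This gives \(J(u)\ge\mathcal L(u,\phi)\ge\mathcal L(u^\star,\phi^\star)=J(u^\star)\), which proves optimality. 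Strong convexity is not needed for this equivalence; it only gives uniqueness of \(u^\star\).
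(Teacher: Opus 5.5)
Your proof is correct and follows the same route as the paper's: treat \eqref{eq:opt} as a convex program with affine constraints, encode \(u\in\Omega\) through the normal cone \(\mathcal C_\Omega(u^\star)\), and read \eqref{eq:nc}--\eqref{eq:mc} off the Lagrangian stationarity conditions in \(u\) and \(\phi\). Your write-up is more complete than the paper's one-line sketch: it justifies necessity through the affine constraint qualification, spells out sufficiency, and shows that the \(\operatorname{span}\{\mathbf{1}_n\}\) component of the normal cone of \(\mathbf{1}_n^\perp\) vanishes because \(\mathbf{1}_n^\top L=0\) and \(\mathbf{1}_n^\top C=0\), a step the paper does not mention.
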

\begin{proof}
Writing the set constraint \(u^\star\in\Omega\) through the normal cone \(\mathcal C_\Omega(u^\star)\) \cite{yang2026distributed,BauschkeCombettes2017}, the Lagrangian stationarity conditions with respect to \(u\) and \(\phi\) yield \eqref{eq:nc} and \eqref{eq:mc}, while \eqref{eq:pf}--\eqref{eq:cs} collect the remaining KKT conditions. Therefore, the stated conditions are equivalent to optimality. \qed
\end{proof}

The normal-cone KKT characterization specifies the desired steady-state set.
For the subsequent incremental stability analysis, it is also important to
specify that the primal optimizer is unique, whereas the multipliers may
remain nonunique. Indeed, \eqref{eq:strong-convexity} ensures the uniqueness of the optimal dispatch $u^\star$. Moreover, if
\((u^\star,\phi_1^\star)\) and \((u^\star,\phi_2^\star)\) are both optimal,
then the nodal balance equations give
\(r(u^\star,\phi_1^\star)=r(u^\star,\phi_2^\star)=0\), and hence
\(L(\phi_1^\star-\phi_2^\star)=0\). Since
\(\phi_1^\star,\phi_2^\star\in\1_n^\perp\) and the network is connected, \(L\)
is positive definite on \(\1_n^\perp\), which implies
\(\phi_1^\star=\phi_2^\star\). Therefore, the primal optimal pair is unique,
although the associated KKT multipliers need not be unique, which is
summarized in the following lemma:

\begin{lemma}[Primal uniqueness]
\label{lem:primal-unique}
Under Assumption \ref{ass:convex-feasible}, the optimal pair
\((u^\star,\phi^\star)\) of \eqref{eq:opt} is unique.
\end{lemma}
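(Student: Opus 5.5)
The plan is a two-step argument: first show that the dispatch component $u^\star$ is unique, then show that it determines the angle component $\phi^\star$ uniquely. Existence of an optimal pair is implicit in the statement. It follows because the feasible set is nonempty (strict feasibility in Assumption~\ref{ass:convex-feasible}), closed and convex. Moreover, $u$ ranges over the compact box $\Omega$, and once $u$ is fixed, $\phi\in\1_n^\perp$ is pinned down by \eqref{eq:pb}, so the feasible set is compact and the continuous $J$ attains its minimum.

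\emph{Step 1 (uniqueness of $u^\star$).} Suppose $(u_1,\phi_1)$ and $(u_2,\phi_2)$ are both optimal with common value $J^\star$. All constraints \eqref{eq:pb}--\eqref{eq:bc} are affine in $(u,\phi)$ or are the convex box, so the feasible set is convex. Hence the midpoint $(\bar u,\bar\phi)=\tfrac12(u_1+u_2,\phi_1+\phi_2)$ is feasible.

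Next I use strong convexity \eqref{eq:strong-convexity} on $\Omega$. Integrating the monotonicity inequality along the segment gives the standard bound
\[
J(\bar u)\le \tfrac12 J(u_1)+\tfrac12 J(u_2)-\tfrac{m_J}{8}\norm{u_1-u_2}^2 .
\]
This equals $J^\star-\tfrac{m_J}{8}\norm{u_1-u_2}^2$. Optimality of $J^\star$ then forces $u_1=u_2=:u^\star$.

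\emph{Step 2 (uniqueness of $\phi^\star$).} Subtracting the two nodal balance equations $r(u^\star,\phi_i)=0$ gives $L(\phi_1-\phi_2)=0$, so $\phi_1-\phi_2\in\ker L$. The graph is connected and $L=CBC^\top$ with $B$ positive diagonal, so $\ker L=\operatorname{span}\{\1_n\}$. On the other hand, $\phi_1-\phi_2\in\1_n^\perp$. Therefore $\phi_1=\phi_2$.

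The only step needing care is Step 1, namely deriving the midpoint inequality from the gradient-monotonicity form of strong convexity rather than from a function-value definition. I would do this by noting that $J-\tfrac{m_J}{2}\norm{\cdot}^2$ has monotone gradient on the convex set $\Omega$ and is therefore convex there. Applying convexity of this function at the midpoint yields the $m_J/8$ term. Positivity of the susceptances in $B$, which is needed for $\ker L=\operatorname{span}\{\1_n\}$, is the standard modeling assumption I will take as given.
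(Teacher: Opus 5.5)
Your proof is correct and follows the same two-step route as the paper: strong convexity of $J$ forces a unique $u^\star$, and subtracting the two nodal balance equations gives $L(\phi_1-\phi_2)=0$ with $\phi_1-\phi_2\in\1_n^\perp$, so connectivity yields $\phi_1=\phi_2$. The paper only asserts the first step, while you justify it with the midpoint inequality on the convex joint feasible set (and add an existence argument), which fills in detail the paper leaves implicit.
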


\subsection{Cyber-Physical Realization and Interface Delays}

The constrained problem \eqref{eq:opt} specifies the desired steady-state
operating point of the closed-loop system.
The control objective is then to design a dynamic feedback mechanism whose equilibria satisfy the optimality
conditions in Lemma~\ref{prop:kkt}. To this end, the physical network is
interconnected with a cyber optimization layer, which updates
the control input in real time based on plant-side measurements. Therefore, \eqref{eq:opt} characterizes the equilibrium to be attained,
whereas the closed-loop dynamics introduced below describe how the cyber and physical layers interact to drive the system toward that equilibrium.

Under ideal delay-free implementation, the cyber layer receives current plant
measurements instantaneously and returns the updated control action
without latency. In practice, nevertheless, these signals are transmitted through
the cyber-physical interface between the physical plant and the
control center.  We thus distinguish two communication delays:
the uplink delay from plant-side measurements to the cyber layer, and the
downlink delay from the cyber layer back to the actuators.

\begin{remark}
The delays considered in this paper arise only from the bidirectional
cyber-physical interface. The internal computation of the cyber optimizer is
assumed to be delay-free. Hence, the delay mechanism studied here is not caused
by the optimizer-side internal dynamics, but by the measurement and actuation
channels connecting the physical plant and the centralized cyber layer.
\end{remark}

We now describe the actual communication signals transmitted across the
cyber--physical interface. On the plant side, the optimizer command \(u\) is
implemented as the generator-level plant input \(p\in\R^{n_g}\). The
measurement returned to the optimizer is denoted by \(y\in\R^{n_g}\), and in
the ideal delay-free limit it coincides with \(G^\top\omega\). Thus, under
ideal delay-free interconnection, one has \(p=u\) and
\(y=G^\top\omega\).

If communication delays are imposed directly on the raw port variables, then
passivity is, in general, no longer preserved \cite{10319778}. To obtain a
delay-compatible passive interconnection, we represent the actual
communication interface in scattering form. Let \(\sigma_p^+\), \(\sigma_o^+\)
be the raw waves launched into the communication channels, whereas let
\(\sigma_p^-\), \(\sigma_o^-\) be the raw waves received from the channels at
the corresponding sides. Here, the subscript \(p\) refers to the plant side of
the interface and the subscript \(o\) refers to the optimizer side. Given a
channel impedance \(\eta>0\), the raw wave variables are defined by
\begin{equation}
\sigma_p^\pm:=\frac{1}{\sqrt{2\eta}}\bigl(p\mp\eta G^\top\omega\bigr),\quad
\sigma_o^\pm:=\frac{1}{\sqrt{2\eta}}\bigl(u\pm\eta y\bigr).
\end{equation}

\begin{figure}[t]
\centering
\includegraphics[width=\linewidth]{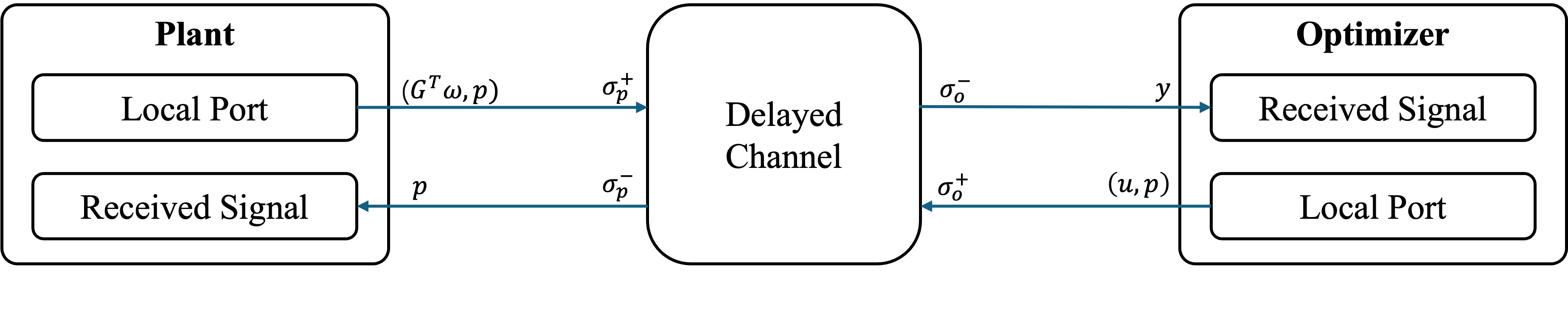}
\caption{Signal interconnections between the plant, optimizer, and scattering-based communication interface.}
\label{fig:plant_optimizer_interface}
\end{figure}

Let \(d_u\) denote the bounded downlink control-to-plant delay and
\(d_\omega\) the uplink plant-to-optimizer delay. The raw delayed
communication channels are represented as 
\begin{equation}
\sigma_p^-(t)=\sigma_o^+(t-d_u),\quad
\sigma_o^-(t)=\sigma_p^+(t-d_\omega).
\end{equation}
Hence, the communication
interface transmits raw waves and does not require knowledge of any
equilibrium quantity. The port variables on the plant and optimizer sides
and the corresponding wave transmissions across the communication interface are given in
Fig.~\ref{fig:plant_optimizer_interface}  . The following optimizer dynamics and passivity analysis are built on this
implementation.
\begin{remark}
The signal \(y\) is not an additional physical measurement sent by the
plant. The actual uplink signal is the wave variable \(\sigma_p^+\), which is
formed at the plant side from \(p\) and \(G^\top\omega\), transmitted through
the measurement channel, and received at the optimizer side as
\(\sigma_o^-(t)=\sigma_p^+(t-d_\omega)\). After receiving \(\sigma_o^-\), the
optimizer combines it with its locally available command \(u\) and reconstructs $y$ using
\(y=(u-\sqrt{2\eta}\sigma_o^-)/\eta\). Similarly, the plant-side input \(p\)
is decoded from the received downlink wave \(\sigma_p^-\) and the local
frequency signal \(G^\top\omega\) using
\(p=\sqrt{2\eta}\sigma_p^- - \eta G^\top\omega\). Thus, \(y\) and \(p\) are not separately transmitted raw signals. Instead,
they are the optimizer-side and plant-side port variables decoded from the
received uplink and downlink waves, respectively. 
\end{remark}

\section{Projected Primal-Dual Dynamics}
\label{sec:controller}

\subsection{Controller Design and Equilibrium Optimality}

The dispatch problem \eqref{eq:opt} specifies the desired steady-state
operating point. We now construct a dynamic optimizer whose equilibria recover
the optimality conditions in Lemma~\ref{prop:kkt}, and which can later be
interconnected with the physical plant through the cyber-physical interface
introduced in Section~\ref{sec:problem}.

Consider the augmented Lagrangian \(\mathcal{L}_a:=\mathcal{L}_a(u,\phi,\lambda,\pi,\) \(\rho^\pm)\) defined by
\begin{equation}
\begin{aligned}
\mathcal{L}_a
:={}&J(u)
+\lambda^\top r(u,\phi)
+\frac{\kappa}{2}\norm{r(u,\phi)}^2+\pi^\top g(\phi) \\
&+(\rho^+)^\top h^+(\phi)
+(\rho^-)^\top h^-(\phi),
\end{aligned}\label{eq:lagrangian}
\end{equation}
where \(\kappa>0\) is an augmentation gain.
The augmentation is applied only to the nodal balance constraint \eqref{eq:pb}. This choice makes the balance
residual appear explicitly in the subsequent passivity argument and improves
the dissipativity properties of the optimizer dynamics. The actuator constraint \(u\in\Omega\), on the other
hand, is enforced directly by
projection, since it represents a
hard feasibility requirement and admits an efficient closed-form
projection in the constrained case \cite{yang2026distributed}.

We model the cyber optimizer as an open dynamical system driven by the
measurement input \(y\) supplied through the cyber-physical interface. Then,
with \eqref{eq:lagrangian}, the projected primal-dual controller is
designed as follows
\begin{subequations}\label{eq:optimizer}
\begin{align}
\tau_u\dot{u}
=\ &
\Pi_{\Omega}
\bigl(
u,
-\nabla J(u)-G^\top\lambda-\kappa G^\top r(u,\phi)-y
\bigr),
\label{eq:optimizer-u} \\
\tau_\phi\dot{\phi}
=\ &
L\lambda
-CB(T^\top\pi+\rho^+-\rho^-)
+\kappa L r(u,\phi), \label{eq:optimizer-phi}\\
\tau_\lambda\dot{\lambda}
=\ &
r(u,\phi),
\label{eq:optimizer-lambda} \\
\tau_\pi\dot{\pi}
=\ &
g(\phi),
\label{eq:optimizer-pi} \\
\tau_+\dot{\rho}^+
=\ &
\Pi_{\R_{\geq 0}^m}\bigl(\rho^+,h^+(\phi)\bigr),
\label{eq:optimizer-rhop} \\
\tau_-\dot{\rho}^-
=\ &
\Pi_{\R_{\geq 0}^m}\bigl(\rho^-,h^-(\phi)\bigr),
\label{eq:optimizer-rhom}
\end{align}
\end{subequations}
where
\(\tau_u,\tau_\phi,\tau_\lambda,\tau_\pi,\tau_+,\tau_-\succ 0\) are diagonal
gain matrices. In \eqref{eq:optimizer}, the \(u\)-subsystem updates
the dispatch variable subject to the capacity constraint, the
\(\phi\)-subsystem updates the virtual phase-angle state to enforce network
consistency, the \(\lambda\)- and \(\pi\)-subsystems integrate the
equality-constraint residuals, and the \(\rho^\pm\)-subsystems enforce the
line-flow inequalities through projected multiplier dynamics. Hence, under feasible initialization, i.e.,
\(u(0)\in\Omega\) and \(\rho^\pm(0)\in\R_{\ge0}^m\), the dynamics preserves
\(u\in\Omega\) and \(\rho^\pm\in\R_{\ge0}^m\). When \(y=0\),
\eqref{eq:optimizer} reduces to a standard augmented projected primal-dual
flow associated with \eqref{eq:opt}; after interconnection with the physical
plant, \(y\) supplies the frequency-measurement signal injected through the
communication interface.
\begin{remark}
In practical power systems, the control center typically has access to
standard operational information, including frequency measurements, tie-line
power, generator outputs, operating limits, and network model data
\cite{NERC2024EMSRisksMitigations,NERC2021BalancingFrequencyControl,PJM2025Manual01}.
Traditional AGC uses only a low-dimensional part of this information, mainly
through the Area Control Error, and is therefore primarily aimed at power
balancing and frequency restoration
\cite{NERC2021BalancingFrequencyControl}. In contrast, the proposed
primal-dual controller uses the available telemetry and network model more
systematically within an optimization-and-control loop, thereby coordinating
frequency regulation, tie-line schedule restoration, transmission congestion
management, and generation capacity constraints in a unified manner.
\end{remark}
\begin{remark}
The projection step in \eqref{eq:optimizer-u} is not limited to componentwise box constraints. Its role is to enforce convex actuator-feasibility constraints
directly at the primal-update level. The same projection-based design can also accommodate coupled output constraints across multiple units, aggregate regulation-capacity limits at the plant or area level \cite{Wang2019DistributedFrequencyControlPartI}, dynamic output-feasibility requirements arising from regulation scheduling \cite{Zhang2019AGCDynamicsConstrainedED}, and participation-factor or reserve-coupling constraints among multiple resources \cite{Zhang2017DRCCOPFReserves}. The box-constrained formulation is therefore adopted for notational simplicity and implementation convenience, rather than as a structural limitation of the method.
\end{remark}

\begin{remark}
When the feasible set has a complicated geometry, computing the projection
can be computationally demanding, and projection-free methods, such as
Frank-Wolfe algorithms, are often used to reduce the online computational
burden by replacing projection with a linear minimization oracle
\cite{Jiang2025DistributedStochasticProjectionFree,wai2017decentralizedFW}. However, such a
situation does not arise in \eqref{eq:optimizer}, where the only projections are onto the box set \(\Omega\) and the nonnegative
orthant \(\R_{\ge0}^m\), both of which reduce to componentwise truncations. Moreover, for
problems involving more complex feasible regions, the present framework could
in principle be extended by replacing the projected primal update with a
Frank-Wolfe-type update. The corresponding analysis would then need to
quantify whether the energy perturbation introduced by the projection-free
approximation can be absorbed in the passivity or Lyapunov argument. This
extension is left for future work.
\end{remark}
The following proposition confirms that the projected dynamics are consistent with
the steady-state dispatch objective. To be specific, when the measurement input
vanishes, every equilibrium of \eqref{eq:optimizer}
satisfies the KKT conditions of \eqref{eq:opt}.  It is worth pointing out that since the equilibria of \eqref{eq:optimizer} and the KKT points of \eqref{eq:opt} are equivalent, we use the superscript \(\star\) for both in the sequel.

\begin{proposition}[Optimality of equilibria]
\label{prop:equilibrium-optimality}
If
\((u^\star,\phi^\star,\lambda^\star,\pi^\star,\rho^{\pm \star})\) is an
equilibrium of \eqref{eq:optimizer}, then \((u^\star,\phi^\star)\) is an
optimal solution of problem \eqref{eq:opt}.
\end{proposition}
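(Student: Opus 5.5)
The plan is to read off the KKT system of Lemma~\ref{prop:kkt} directly from the equilibrium equations of \eqref{eq:optimizer} with \(y=0\), which is the setting stated just before the proposition. Sufficiency of KKT under Assumption~\ref{ass:convex-feasible} then gives optimality. Throughout, the equilibrium is understood with \(u^\star\in\Omega\) and \(\rho^{\pm\star}\in\R^m_{\ge0}\), since these sets are invariant under feasible initialization, and with \(\phi^\star\in\1_n^\perp\).

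\emph{Equality constraints.} Setting \eqref{eq:optimizer-lambda} and \eqref{eq:optimizer-pi} to zero gives \(r(u^\star,\phi^\star)=0\) and \(g(\phi^\star)=0\), because \(\tau_\lambda,\tau_\pi\) are invertible.

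\emph{Multiplier stationarity.} Substituting \(r=0\) into \eqref{eq:optimizer-phi} at equilibrium gives \(0=L\lambda^\star-CB(T^\top\pi^\star+\rho^{+\star}-\rho^{-\star})\). This is \eqref{eq:mc} up to sign.

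\emph{Stationarity in \(u\).} Substituting \(r=0\) and \(y=0\) into \eqref{eq:optimizer-u} yields \(\Pi_\Omega(u^\star,-\nabla J(u^\star)-G^\top\lambda^\star)=0\). By Lemma~\ref{lem:proj-equilibrium}, this means \(-\nabla J(u^\star)-G^\top\lambda^\star\in\mathcal C_\Omega(u^\star)\), which is \eqref{eq:nc} because \(\mathcal C_\Omega(u^\star)\) is a cone, so \(0\in\nabla J+G^\top\lambda+\mathcal C_\Omega\).

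\emph{Inequality constraints and complementarity.} Applying Lemma~\ref{lem:proj-equilibrium} to \eqref{eq:optimizer-rhop} with \(\Omega=\R^m_{\ge0}\) gives \(h^+(\phi^\star)\in\mathcal C_{\R^m_{\ge0}}(\rho^{+\star})\). I would characterize this normal cone componentwise: \(v\in\mathcal C_{\R^m_{\ge0}}(\rho)\) if and only if \(v\le0\), \(v_i=0\) whenever \(\rho_i>0\), and consequently \(\rho^\top v=0\). To see this, test the defining inequality with \(z=\rho+e_i\), which gives \(v_i\le 0\), and with \(z=0\) and \(z=2\rho\), which together give \(\rho^\top v=0\). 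Hence \(h^+(\phi^\star)\le0\) and \((\rho^{+\star})^\top h^+(\phi^\star)=0\), and the same argument applies to \(\rho^-\). Together with \(u^\star\in\Omega\), this establishes \eqref{eq:pf}--\eqref{eq:cs}.

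\emph{Conclusion.} All of \eqref{eq:kkt} now holds with multipliers \(\lambda^\star,\pi^\star,\rho^{\pm\star}\). By Lemma~\ref{prop:kkt}, \((u^\star,\phi^\star)\) is optimal, and it is the unique optimum by Lemma~\ref{lem:primal-unique}.

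The only points needing care are the sign bookkeeping in \eqref{eq:mc} and the orthant normal-cone characterization. Neither is a real obstacle.
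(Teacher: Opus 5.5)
Your proposal is correct and follows the same route as the paper. You read off \eqref{eq:pb}--\eqref{eq:tl} from the $\lambda$- and $\pi$-equations, apply Lemma~\ref{lem:proj-equilibrium} to the projected $u$- and $\rho^\pm$-equations to get \eqref{eq:pf}--\eqref{eq:nc}, obtain \eqref{eq:mc} from the $\phi$-equation, and conclude with Lemma~\ref{prop:kkt}. Beyond the paper's terse version, you make explicit that $r=0$ and $y=0$ remove the augmentation and measurement terms, and you give the componentwise orthant normal-cone characterization that yields complementarity.
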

\begin{proof}
At equilibrium, \eqref{eq:optimizer-lambda} and \eqref{eq:optimizer-pi} give \eqref{eq:pb} and \eqref{eq:tl}. Applying Lemma~\ref{lem:proj-equilibrium} to \eqref{eq:optimizer-u} and \eqref{eq:optimizer-rhop}--\eqref{eq:optimizer-rhom} gives \eqref{eq:pf}, \eqref{eq:cs}, and \eqref{eq:nc}, and \eqref{eq:optimizer-phi} gives \eqref{eq:mc}. Therefore, by Lemma~\ref{prop:kkt}, \((u^\star,\phi^\star)\) is optimal for \eqref{eq:opt}.
\end{proof}

\subsection{Passivity-Based Closed-Loop Analysis}\label{Sec5}

Having established the steady-state consistency of the cyber optimizer, we now
analyze its dynamic interconnection with the physical network through the
delayed cyber-physical interface. By Lemma~\ref{lem:primal-unique}, the primal
optimizer \((u^\star,\phi^\star)\) is unique.
Fix a representative KKT point
\(
z^\star:=(u^\star,\phi^\star,\lambda^\star,\pi^\star,\rho^{\pm\star}).
\)
The associated physical equilibrium is chosen as
\(\theta^\star=\phi^\star\), \(\omega^\star=0\), and \(p^\star=u^\star\)
after fixing the angle reference so that the angle variables lie in
\(\1_n^\perp\).  Define the
incremental variables by
\(\tilde\theta:=\theta-\theta^\star\),
\(\tilde\omega:=\omega-\omega^\star\),
\(\tilde u:=u-u^\star\), and define
\(\tilde\phi,\tilde\lambda,\tilde\pi,\tilde\rho^\pm, \tilde p\) analogously. 
Since \(\omega^\star=0\), one has \(G^\top\omega=G^\top\tilde\omega\) in
these coordinates, and the corresponding raw equilibrium waves satisfy $\sigma_p^{+\star}=\sigma_p^{-\star}
=\sigma_o^{+\star}=\sigma_o^{-\star}
=\frac{u^\star}{\sqrt{2\eta}}.$
Define the incremental wave variables by
\begin{equation}
\begin{aligned}
\tilde{\sigma}_p^\pm&:=\sigma_p^\pm-\frac{u^\star}{\sqrt{2\eta}}
=\frac{1}{\sqrt{2\eta}}\bigl(\tilde p\mp\eta G^\top\tilde\omega\bigr),\\
\tilde{\sigma}_o^\pm&:=\sigma_o^\pm-\frac{u^\star}{\sqrt{2\eta}}
=\frac{1}{\sqrt{2\eta}}\bigl(\tilde u\pm\eta y\bigr).
\end{aligned}
\label{eq:spm}
\end{equation}
Since the raw delay channels are linear and the equilibrium wave bias is
constant, the incremental waves satisfy
\begin{equation}
\tilde{\sigma}_p^-(t)=\tilde{\sigma}_o^+(t-d_u),\quad
\tilde{\sigma}_o^-(t)=\tilde{\sigma}_p^+(t-d_\omega).
\label{eq:delay-lines}
\end{equation}
The analysis is conducted componentwise: the plant is viewed as a system from
\(\tilde p\) to \(G^\top\tilde\omega\), the optimizer is viewed as a system
from \(y\) to \(-\tilde u\), and the scattering channel is viewed as a
lossless passive transmission element. The componentwise passivity properties
are then combined to obtain a closed-loop dissipation inequality.
In incremental coordinates, the plant dynamics are
\begin{equation}\label{eq:plant-dev}
\begin{aligned}
\dot{\tilde{\theta}} &= \tilde{\omega},\\
M\dot{\tilde{\omega}} &= -D\tilde{\omega}-L\tilde{\theta}+G\tilde p.
\end{aligned}
\end{equation}
The following lemma collects the passivity properties of the plant, optimizer,
and delayed scattering channel at their respective interface ports.

\begin{lemma}[Componentwise passivity]
\label{lem:passivity-blocks}
The following passivity properties hold.

\begin{enumerate}
\item[(i)]
The plant subsystem \eqref{eq:plant-dev} is passive from \(\tilde p\) to \(G^\top\tilde\omega\), with strict damping in \(\tilde{\omega}\).

\item[(ii)] 
The optimizer subsystem \eqref{eq:optimizer} is passive from \(y\) to \(-\tilde u\). With
the quadratic incremental storage used below, its dissipation
contains the gradient-monotonicity term and the augmented power-balance residual.

\item[(iii)] 
The delayed scattering channel \eqref{eq:spm}-\eqref{eq:delay-lines} is lossless passive, and its storage
accounts for the energy transported along the delay lines.
\end{enumerate}
\end{lemma}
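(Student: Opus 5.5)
Each of the three items will be proved by writing down an explicit quadratic incremental storage and differentiating it along trajectories.

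\emph{Item (i).} I take the plant storage
\[
S_p:=\tfrac12\tilde\omega^\top M\tilde\omega+\tfrac12\tilde\theta^\top L\tilde\theta .
\]
This is nonnegative because \(L\succeq0\). Along \eqref{eq:plant-dev}, the coupling term \(-\tilde\omega^\top L\tilde\theta\) cancels \(\tilde\theta^\top L\dot{\tilde\theta}\). Hence
\[
\dot S_p=-\tilde\omega^\top D\tilde\omega+(G^\top\tilde\omega)^\top\tilde p ,
\]
which is passivity from \(\tilde p\) to \(G^\top\tilde\omega\) with strict damping \(-\tilde\omega^\top D\tilde\omega\).

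\emph{Item (ii).} I use the weighted storage
\[
S_o:=\tfrac12\bigl(\tilde u^\top\tau_u\tilde u+\tilde\phi^\top\tau_\phi\tilde\phi+\tilde\lambda^\top\tau_\lambda\tilde\lambda+\tilde\pi^\top\tau_\pi\tilde\pi+(\tilde\rho^+)^\top\tau_+\tilde\rho^++(\tilde\rho^-)^\top\tau_-\tilde\rho^-\bigr).
\]
For the \(u\)- and \(\rho^\pm\)-blocks, Lemma~\ref{lem:proj-nonexpansive} with \(z=u^\star\) (resp.\ \(\rho^{\pm\star}\)) removes the projections and yields inequalities with the unprojected vector fields.

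Next I subtract the equilibrium relations from the KKT conditions of Lemma~\ref{prop:kkt}. At the equilibrium, \eqref{eq:nc} and Lemma~\ref{lem:proj-equilibrium} give
\[
\tilde u^\top\bigl(-\nabla J(u^\star)-G^\top\lambda^\star\bigr)\ge0 \quad\text{for all } u\in\Omega,
\]
so this term can be added freely. Likewise \(\tilde\rho^{\pm\top}h^\pm(\phi^\star)\le0\), by complementarity \eqref{eq:cs} and \(\rho^\pm\ge0\). Moreover \(r^\star=0\) and \(g^\star=0\).

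With these substitutions, the bilinear Lagrangian cross terms cancel pairwise by skew symmetry: \(\tilde u^\top G^\top\tilde\lambda\) against \(\tilde\lambda^\top G\tilde u\), and \(\tilde\phi^\top L\tilde\lambda\) against \(-\tilde\lambda^\top L\tilde\phi\). The same happens for the \(\pi\) and \(\rho\) couplings through \(CB\) versus \(BC^\top\). The augmentation terms combine as
\[
-\kappa\tilde u^\top G^\top r+\kappa\tilde\phi^\top L r=-\kappa\,r^\top(G\tilde u-L\tilde\phi)=-\kappa\|r\|^2 ,
\]
since \(r=G\tilde u-L\tilde\phi\). The step I expect to be most delicate is the sign convention in this augmentation term.

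Collecting everything gives
\[
\dot S_o\le -m_J\|\tilde u\|^2-\kappa\|r(u,\phi)\|^2+(-\tilde u)^\top y ,
\]
using \eqref{eq:strong-convexity}. This is passivity from \(y\) to \(-\tilde u\), with exactly the gradient-monotonicity and residual dissipation claimed in the statement. I will check that \(y^\star=0\), which holds because \(\omega^\star=0\), so that no \(y^\star\) term appears.

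\emph{Item (iii).} I use the standard scattering identity, which follows directly from \eqref{eq:spm}:
\[
\tfrac12\bigl(\|\tilde\sigma^+\|^2-\|\tilde\sigma^-\|^2\bigr)=\pm(\text{effort})^\top(\text{flow})
\]
on each side. Concretely, on the plant side \(\tfrac12(\|\tilde\sigma_p^-\|^2-\|\tilde\sigma_p^+\|^2)=\tilde p^\top G^\top\tilde\omega\). On the optimizer side \(\tfrac12(\|\tilde\sigma_o^+\|^2-\|\tilde\sigma_o^-\|^2)=\tilde u^\top y\).

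I define the channel storage
\[
S_c:=\tfrac12\int_{t-d_u}^{t}\|\tilde\sigma_o^+(s)\|^2ds+\tfrac12\int_{t-d_\omega}^{t}\|\tilde\sigma_p^+(s)\|^2ds .
\]
By \eqref{eq:delay-lines},
\[
\dot S_c=\tfrac12\bigl(\|\tilde\sigma_o^+\|^2-\|\tilde\sigma_p^-\|^2+\|\tilde\sigma_p^+\|^2-\|\tilde\sigma_o^-\|^2\bigr)=\tilde u^\top y-\tilde p^\top G^\top\tilde\omega .
\]
The supply rate is therefore exactly the negative of the sum of the two port supplies, with equality, so the channel is lossless passive. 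This also shows that \(S_p+S_o+S_c\) is nonincreasing, as will be used afterwards.

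For constant delays this derivative is immediate. For time-varying but bounded delays, the derivative would pick up \(\dot d\) factors, so I would either state the constant-delay case or assume \(\dot d\le0\) in the rate bounds.
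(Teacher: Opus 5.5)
Your proposal is correct and follows the paper's proof essentially step for step: it uses the same storages $S_p$, $S_o$ and delay-line storage, removes the projections the same way via Lemma~\ref{lem:proj-nonexpansive} and the equilibrium normal-cone conditions, gets the same skew-symmetric cancellation of the Lagrangian cross terms, and uses the same identity $-\kappa\tilde u^\top G^\top r+\kappa\tilde\phi^\top Lr=-\kappa\|r\|^2$ (the paper keeps the monotonicity term $-\tilde u^\top(\nabla J(u)-\nabla J(u^\star))$ and invokes strong convexity only in Theorem~\ref{thm:raw-delayed-scattering}). There is one sign slip to fix: \eqref{eq:nc} gives $\tilde u^\top(-\nabla J(u^\star)-G^\top\lambda^\star)\le 0$, equivalently $\tilde u^\top(\nabla J(u^\star)+G^\top\lambda^\star)\ge 0$, and it is this nonnegative quantity you add to the bound; also, $\tilde\rho^{\pm\top}h^\pm(\phi^\star)\le 0$ needs primal feasibility $h^\pm(\phi^\star)\le 0$ in addition to complementarity and $\rho^\pm\ge 0$.
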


\begin{proof}
(i) Consider
\(S_p:=\frac{1}{2}\tilde{\omega}^\top M\tilde{\omega}
+\frac{1}{2}\tilde{\theta}^\top L\tilde{\theta}\).
Along the trajectories of \eqref{eq:plant-dev},
\begin{equation}\label{passive1}
\dot S_p=-\tilde{\omega}^\top D\tilde{\omega}
+\tilde{\omega}^\top G\tilde p
=(G^\top\tilde\omega)^\top\tilde p-\tilde{\omega}^\top D\tilde{\omega},
\end{equation}
which shows passivity from \(\tilde p\) to \(G^\top\tilde{\omega}\), with
strict damping in \(\tilde{\omega}\).

(ii) Define the optimizer storage
\[
\begin{aligned}
S_o
:={}&
\frac{1}{2}\tilde{u}^\top\tau_u\tilde{u}
+\frac{1}{2}\tilde{\phi}^\top\tau_\phi\tilde{\phi}
+\frac{1}{2}\tilde{\lambda}^\top\tau_\lambda\tilde{\lambda}+
\frac{1}{2}\tilde{\pi}^\top\tau_\pi\tilde{\pi}\\
&
+\frac{1}{2}\tilde{\rho}^{+\top}\tau_+\tilde{\rho}^+
+\frac{1}{2}\tilde{\rho}^{-\top}\tau_-\tilde{\rho}^-.
\end{aligned}
\]
Let
\(\xi_u:=-\nabla J(u)-G^\top\lambda-\kappa G^\top r(u,\phi)-y\) for notation simplicity. At the reference equilibrium with
\(y=0\), define
\(\xi_u^\star:=-\nabla J(u^\star)-G^\top\lambda^\star\).
Since \((u^\star,\phi^\star,\lambda^\star,\pi^\star,\rho^{\pm\star})\) is an
equilibrium of \eqref{eq:optimizer} with \(y=0\), Lemma~\ref{lem:proj-equilibrium} gives~\(\xi_u^\star\in \mathcal{C}_\Omega(u^\star)\),
and \(\tilde{u}^\top \xi_u^\star\le 0\). By Lemma~\ref{lem:proj-nonexpansive}, applied to \eqref{eq:optimizer-u},
\begin{equation}
\label{eq:u-block-rewrite}
\tilde{u}^\top\tau_u\dot{u}
=
\tilde{u}^\top\Pi_\Omega(u,\xi_u)
\le
\tilde{u}^\top(\xi_u-\xi_u^\star).
\end{equation}
Lemma~\ref{lem:proj-equilibrium} implies
\(h^\pm(\phi^\star)\in \mathcal C_{\R_{\ge0}^m}(\rho^{\pm\star})\), and hence
\(\tilde{\rho}^{\pm\top}h^\pm(\phi^\star)\le 0\). Applying
Lemma~\ref{lem:proj-nonexpansive} to
\eqref{eq:optimizer-rhop}-\eqref{eq:optimizer-rhom} gives
\begin{equation}
\tilde{\rho}^{+\top}\tau_+\dot{\rho}^+
+\tilde{\rho}^{-\top}\tau_-\dot{\rho}^-
\le
(\tilde{\rho}^+-\tilde{\rho}^-)^\top BC^\top\tilde{\phi}.
\end{equation}
Moreover, the \((\phi)\)-, \((\lambda)\)-, \((\pi)\)-blocks in \eqref{eq:optimizer-phi}-\eqref{eq:optimizer-pi} satisfy
\begin{equation}
\begin{aligned}
&\tilde{\phi}^\top\tau_\phi\dot{\phi}
+\tilde{\lambda}^\top\tau_\lambda\dot{\lambda}
+\tilde{\pi}^\top\tau_\pi\dot{\pi}\\
=&
\tilde{u}^\top G^\top\tilde{\lambda}
+\kappa \tilde{\phi}^\top L r(u,\phi)-\tilde{\phi}^\top CB(\tilde{\rho}^+-\tilde{\rho}^-).
\end{aligned}
\end{equation}
Since
\(r(u^\star,\phi^\star)=0\), one has \(r(u,\phi)=G\tilde u-L\tilde\phi\), and therefore
\begin{equation}\label{kappa}
-\kappa \tilde u^\top G^\top r(u,\phi)
+\kappa \tilde\phi^\top L r(u,\phi)
=
-\kappa \norm{r(u,\phi)}^2.
\end{equation}
Combining \eqref{eq:u-block-rewrite}-\eqref{kappa} yields
\begin{equation}
\label{eq:optimizer-passivity-wq}
\dot{S}_o
\leq
-\tilde{u}^\top\bigl(\nabla J(u)-\nabla J(u^\star)\bigr)
-\kappa\norm{r(u,\phi)}^2
-\tilde u^\top y.
\end{equation}
This proves passivity of the optimizer from \(y\) to \(-\tilde u\), with the stated
dissipation terms.

(iii) Define the delay-line storage
\[
S_\tau:=
\frac{1}{2}\int_{t-d_u}^{t}\norm{\tilde{\sigma}_o^+(\sigma)}^2\,d\sigma
+
\frac{1}{2}\int_{t-d_\omega}^{t}\norm{\tilde{\sigma}_p^+(\sigma)}^2\,d\sigma.
\]
Differentiating $S_\tau$ and using \eqref{eq:delay-lines} gives
\(
\dot{S}_\tau
=
\frac{1}{2}\bigl(\norm{\tilde{\sigma}_p^+}^2-\norm{\tilde{\sigma}_p^-}^2\bigr)
+
\frac{1}{2}\bigl(\norm{\tilde{\sigma}_o^+}^2-\norm{\tilde{\sigma}_o^-}^2\bigr).
\)
By the scattering definitions \eqref{eq:spm},
\(\frac{1}{2}\bigl(\norm{\tilde{\sigma}_p^+}^2-\norm{\tilde{\sigma}_p^-}^2\bigr)=-(G^\top\tilde\omega)^\top\tilde p\) and
\(\frac{1}{2}\bigl(\norm{\tilde{\sigma}_o^+}^2-\norm{\tilde{\sigma}_o^-}^2\bigr)=\tilde u^\top y\).
Hence
\begin{equation}\label{passive3}
    \dot{S}_\tau=-(G^\top\tilde\omega)^\top\tilde p+\tilde u^\top y.
\end{equation}
which proves lossless passivity of the delayed scattering channel.
\qed
\end{proof}

Lemma~\ref{lem:passivity-blocks} provides the three ingredients required for
the delayed closed-loop analysis. Then, combining these facts gives the
following dissipation inequality for the full delayed interconnection.

\begin{theorem}[Delayed closed-loop dissipation]
\label{thm:raw-delayed-scattering}
Let the plant \eqref{eq:plant-dev}, the optimizer
\eqref{eq:optimizer}, and the scattering channel
\eqref{eq:spm}-\eqref{eq:delay-lines} be interconnected through the
relations defined above. Then, \(S:=S_p+S_o+S_\tau\) satisfies
\begin{equation}
\begin{aligned}
\dot{S}
\leq{}&
-\tilde{\omega}^\top D\tilde{\omega}
-\tilde{u}^\top\bigl(\nabla J(u)-\nabla J(u^\star)\bigr)
-\kappa\norm{r(u,\phi)}^2\\
\leq{}&
-\lambda_{\min}(D)\norm{\tilde\omega}^2
-m_J\norm{\tilde u}^2
-\kappa\norm{r(u,\phi)}^2\\
\leq{}&
-c_0\bigl(
\norm{\tilde\omega}^2+\norm{\tilde u}^2+\norm{r(u,\phi)}^2
\bigr),
\end{aligned}
\label{eq:raw-delayed-dissipation}
\end{equation}
where \(c_0:=\min\{\lambda_{\min}(D),m_J,\kappa\}>0\). Moreover, there exists \(c_\phi>0\) such that
\begin{equation}
\norm{\tilde\phi}^2
\le
c_\phi\bigl(\norm{\tilde u}^2+\norm{r(u,\phi)}^2\bigr).
\label{eq:phi-bound-primal}
\end{equation}
\end{theorem}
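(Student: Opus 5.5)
The plan is to sum the three componentwise dissipation relations of Lemma~\ref{lem:passivity-blocks} and observe that the interface supply rates cancel exactly. First, add \eqref{passive1}, \eqref{eq:optimizer-passivity-wq} and \eqref{passive3}. The plant contributes the supply \((G^\top\tilde\omega)^\top\tilde p\), and the channel contributes \(-(G^\top\tilde\omega)^\top\tilde p\), so these cancel. The optimizer contributes \(-\tilde u^\top y\), and the channel contributes \(+\tilde u^\top y\), so these cancel as well. The cancellation holds pointwise in time, since the port variables \(\tilde p\) and \(y\) are exactly those decoded from the delayed waves via \eqref{eq:spm}--\eqref{eq:delay-lines}. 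The delays therefore enter only through the storage \(S_\tau\) and never through the supply rates. What remains is the first inequality of \eqref{eq:raw-delayed-dissipation}.

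The second and third inequalities are routine. Because \(D\) is diagonal positive definite, \(\tilde\omega^\top D\tilde\omega\ge\lambda_{\min}(D)\norm{\tilde\omega}^2\). The projected dynamics keep \(u\in\Omega\) under feasible initialization, and \(u^\star\in\Omega\), so strong convexity \eqref{eq:strong-convexity} gives \(\tilde u^\top(\nabla J(u)-\nabla J(u^\star))\ge m_J\norm{\tilde u}^2\). Taking \(c_0\) as the smallest of the three coefficients yields the final bound.

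For \eqref{eq:phi-bound-primal}, start from \(r(u,\phi)=G\tilde u-L\tilde\phi\), which was already used in the proof of Lemma~\ref{lem:passivity-blocks}. This gives \(L\tilde\phi=G\tilde u-r(u,\phi)\). The optimizer dynamics \eqref{eq:optimizer-phi} keep \(\phi\) in \(\1_n^\perp\): if \(\tau_\phi\) is a scalar multiple of the identity, \(\dot\phi\in\operatorname{range}(L)\cup\operatorname{range}(C)\subseteq\1_n^\perp\) because \(\1_n^\top C=0\). Together with \(\phi^\star\in\1_n^\perp\), this gives \(\tilde\phi\in\1_n^\perp\). Connectivity of the network makes \(L\) positive definite on \(\1_n^\perp\), with smallest nonzero eigenvalue \(\lambda_2(L)>0\). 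Hence
\[
\norm{\tilde\phi}\le\lambda_2(L)^{-1}\norm{L\tilde\phi}\le\lambda_2(L)^{-1}\bigl(\norm{G}\norm{\tilde u}+\norm{r(u,\phi)}\bigr).
\]
Squaring and applying \((a+b)^2\le2a^2+2b^2\) then gives \(c_\phi=2\lambda_2(L)^{-2}\max\{\norm{G}^2,1\}\).

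The main obstacle is justifying \(\tilde\phi\in\1_n^\perp\) along trajectories when \(\tau_\phi\) is a general diagonal matrix, since then \(\tau_\phi^{-1}\) need not preserve \(\1_n^\perp\). My first remedy is to interpret \(\phi\) modulo the common-mode direction, as the paper's convention ``angle variables lie in \(\1_n^\perp\)'' suggests. Concretely, apply the bound to the projection \(\mathcal P_{\1_n^\perp}\tilde\phi\); this is harmless because \(r\) depends on \(\phi\) only through \(L\phi\) and \(L\1_n=0\). Alternatively, assume \(\tau_\phi\) is a scalar multiple of \(I_n\), so that the invariance follows directly from the computation above.
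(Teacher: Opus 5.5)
Your proposal is correct and follows the paper's proof: you sum \eqref{passive1}, \eqref{eq:optimizer-passivity-wq} and \eqref{passive3} so the port supplies cancel, bound the dissipation using \(\lambda_{\min}(D)\) and strong convexity, and control \(\tilde\phi\) through \(L\tilde\phi=G\tilde u-r(u,\phi)\) together with positive definiteness of \(L\) on \(\1_n^\perp\). Your caveat is well taken: for a non-scalar diagonal \(\tau_\phi\), \eqref{eq:optimizer-phi} does not by itself keep \(\tilde\phi\) in \(\1_n^\perp\), and the paper settles this only by the convention that the angle reference is fixed in \(\1_n^\perp\), which amounts to your first remedy.
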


\begin{proof}
The first inequality in \eqref{eq:raw-delayed-dissipation} follows by summing \eqref{passive1},
\eqref{eq:optimizer-passivity-wq}, and \eqref{passive3}. The strong-convexity \eqref{eq:strong-convexity} gives
\(\tilde{u}^\top\bigl(\nabla J(u)-\nabla J(u^\star)\bigr)
\ge
m_J\norm{\tilde u}^2\),
which yields the second inequality. The
third inequality follows from the definition of \(c_0\). By
\(r(u^\star,\) \(\phi^\star)=0\), one has
\(
L\tilde\phi=G\tilde u-r(u,\phi).
\)
Since the angle reference is fixed in \(\1_n^\perp\) and the network
is connected, \(L\) is positive definite on \(\1_n^\perp\). Thus,
\eqref{eq:phi-bound-primal} holds for some \(c_\phi>0\).
\qed
\end{proof}

Theorem~\ref{thm:raw-delayed-scattering} shows that the delayed closed loop retains a strict dissipation mechanism in the physical frequency, the primal dispatch variable, and the power-balance residual. We next show that the proposed controller not only guarantees asymptotic stability, but also admits local exponential stability.

Let \(\mathcal R\) denote the local equilibrium set of the nominal delayed
closed-loop system associated with \(z^\star\). This set allows the possible
nonuniqueness of KKT multipliers, while the plant variables and the primal
optimizer variables are fixed at their equilibrium values.

\begin{proposition}[Local exponential stability]
\label{prop:nominal-set-exp}
The nominal delayed
closed-loop system is locally exponentially stable with respect to
\(\mathcal R\).
\end{proposition}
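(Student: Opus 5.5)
Our plan is to upgrade the dissipation inequality of Theorem~\ref{thm:raw-delayed-scattering} to exponential convergence by working locally near \(z^\star\), where strict complementarity (Assumption~\ref{ass:convex-feasible}) freezes the active sets. Near the KKT point, the constraints \(u\in\Omega\), \(\rho^\pm\ge0\) that are inactive stay inactive. The active ones have strictly positive marginal values, so they remain binding and the corresponding projections act as identity or zero. Locally, the closed loop is therefore a linear time-delay system (the delay lines are linear in the wave variables \eqref{eq:delay-lines}). In it, the coordinates of \(u\) at active bounds and the multipliers \(\rho^\pm\) of inactive lines are constant. The active line flows become equality constraints, and the remaining multipliers \((\lambda,\pi,\rho^\pm_{\mathrm{act}})\) enter only through the matrix \(A^\top:=[\,G^\top;\ -L;\ CBT^\top;\ CB\,]\) restricted to the active rows. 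The equilibrium set \(\mathcal R\) is then the affine set \(\{\lambda\text{-type multipliers}\in z^\star+\ker A^\top\}\) with all other states fixed. We will therefore measure distance to \(\mathcal R\) by projecting the multiplier error onto \(\operatorname{range}(A)\).

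Next, we build a strict Lyapunov–Krasovskii functional \(V:=S+\epsilon W\), where \(S=S_p+S_o+S_\tau\) is the storage from Theorem~\ref{thm:raw-delayed-scattering}. The cross term \(W\) supplies the missing decay in \(\tilde\theta\), \(\tilde\phi\), the multiplier components in \(\operatorname{range}(A)\), and the delay-line waves. We take
\[
W:=\tilde\theta^\top M\tilde\omega-\tilde\mu^\top A^\top\tau\dot{x}+\tfrac12\int_{t-d_u}^{t}(\sigma-t+d_u)\norm{\tilde\sigma_o^+(\sigma)}^2d\sigma+\tfrac12\int_{t-d_\omega}^{t}(\sigma-t+d_\omega)\norm{\tilde\sigma_p^+(\sigma)}^2d\sigma .
\]
Here \(\tilde\mu\) is the multiplier error projected onto \(\operatorname{range}(A)\), and \(\tau\dot x\) is a linear expression in the primal residuals. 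We then differentiate each piece:
\begin{itemize}
\item \(\tilde\theta^\top M\tilde\omega\) yields \(-\tilde\theta^\top L\tilde\theta\) plus terms bounded by \(\norm{\tilde\omega}^2\) and \(\norm{\tilde p}^2\).
\item The multiplier cross term yields \(-\tilde\mu^\top A^\top A\tilde\mu\). This is coercive on \(\operatorname{range}(A)\) because the active constraint gradients there are linearly independent. We will need to argue this from strict feasibility, since the equality block \((r,g)\) has full row rank thanks to \(\phi\in\1_n^\perp\) and \(u\) being free. Its cross terms involve \(\tilde u\), \(r\), \(\tilde\phi\) and \(y\).
\item The weighted integrals yield \(-\frac12\int\norm{\cdot}^2\) (decay of the stored waves) plus the instantaneous wave norms, which are bounded by \(\norm{\tilde p}^2+\norm{\tilde\omega}^2+\norm{\tilde u}^2+\norm{y}^2\).
\end{itemize}

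The key step, and the main obstacle, is absorbing the indefinite cross terms involving the decoded port signals \(\tilde p\) and \(y\). These signals are not directly dissipated in \(\dot S\); only \(\tilde\omega\), \(\tilde u\), and \(r\) are. Our approach is to express them through the scattering relations \eqref{eq:spm}: \(\tilde p=\sqrt{2\eta}\,\tilde\sigma_p^- -\eta G^\top\tilde\omega\) with \(\tilde\sigma_p^-(t)=\tilde\sigma_o^+(t-d_u)\), and similarly for \(y\). This bounds \(\norm{\tilde p}^2+\norm{y}^2\) by current dissipated quantities plus delayed values that are controlled by the integral terms of \(W\), as in standard Krasovskii arguments for wave-variable teleoperation.

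We then choose \(\epsilon\) small enough to obtain \(\dot V\le -c\,(\norm{\tilde\omega}^2+\norm{\tilde u}^2+\norm{r}^2+\norm{\tilde\theta}^2+\norm{\tilde\mu}^2+\text{delay integrals})\). Using \eqref{eq:phi-bound-primal} for \(\tilde\phi\), this gives \(\dot V\le -cV\) together with \(c_1\operatorname{dist}^2\le V\le c_2\operatorname{dist}^2\) in the \(\mathcal R\)-distance. If the direct delayed bound turns out not to close, the fallback is to exploit exactly the componentwise contributions in \eqref{passive1} and \eqref{passive3}, which at least keep the energy in the delay segments bounded.

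Finally, we confirm that trajectories starting close enough to \(\mathcal R\) stay in the neighborhood where the active sets are frozen. This follows from \(\dot S\le0\) and the equivalence of \(S\) with the squared distance. Exponential decay of \(V\) then gives local exponential stability with respect to \(\mathcal R\).
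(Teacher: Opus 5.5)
\paragraph{The failing step.} The gap is in the step you flag as the main obstacle: absorbing \(\tilde p\) and \(y\) through the Krasovskii terms.

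These decoded signals contain the pointwise delayed waves \(\tilde\sigma_p^-(t)=\tilde\sigma_o^+(t-d_u)\) and \(\tilde\sigma_o^-(t)=\tilde\sigma_p^+(t-d_\omega)\). Your weighted integrals cannot control them, because \(\tfrac{\mathrm d}{\mathrm dt}\tfrac12\int_{t-d_u}^{t}(\sigma-t+d_u)\norm{\tilde\sigma_o^+(\sigma)}^2d\sigma=\tfrac{d_u}{2}\norm{\tilde\sigma_o^+(t)}^2-\tfrac12\int_{t-d_u}^{t}\norm{\tilde\sigma_o^+(\sigma)}^2d\sigma\). This adds a positive instantaneous term and only an averaged decay. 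It never produces \(-\norm{\tilde\sigma_o^+(t-d_u)}^2\).

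The only source of such pointwise terms is \(S_\tau\). There they are used up exactly by the lossless balances \(\tfrac12(\norm{\tilde\sigma_p^+}^2-\norm{\tilde\sigma_p^-}^2)=-(G^\top\tilde\omega)^\top\tilde p\) and \(\tfrac12(\norm{\tilde\sigma_o^+}^2-\norm{\tilde\sigma_o^-}^2)=\tilde u^\top y\), leaving no margin. Reweighting the two integrals of \(S_\tau\) does not help either.

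A margin would require reflection loss at the ports, for example \(\norm{\tilde\sigma_p^+}^2\le(1-\delta)\norm{\tilde\sigma_p^-}^2+c\norm{\tilde\omega}^2\). However, neither subsystem has direct feedthrough, so \(\tilde\sigma_p^+=\tilde\sigma_p^- -\sqrt{2\eta}\,G^\top\tilde\omega\) and \(\tilde\sigma_o^+=\sqrt{2/\eta}\,\tilde u-\tilde\sigma_o^-\). Incoming waves are therefore reflected with coefficients \(+1\) and \(-1\), and the inequality fails already at \(\tilde\omega=0\).

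\paragraph{Why the obstruction is structural.} Eliminating the waves gives \(\tilde\sigma_p^-(t)+\tilde\sigma_p^-(t-T)=\sqrt{2/\eta}\,\tilde u(t-d_u)+\sqrt{2\eta}\,G^\top\tilde\omega(t-T)\), with \(T:=d_u+d_\omega\). The unfiltered loop is thus of neutral (differential--difference) type, and its difference operator has characteristic roots \(s=i\pi(2k+1)/T\) on the imaginary axis.

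Since \(\tilde\omega\) and \(\tilde u\) respond to the waves with gain \(O(1/|s|)\), the closed-loop roots form chains \(s_k=i\pi(2k+1)/T+o(1)\) with \(\operatorname{Re}s_k\to 0\). A scalar example shows this: take \(m\dot\omega=-b\,\omega+p\) and \(\tau\dot u=-a u-y\), closed through the same scattering channel. Its characteristic quasi-polynomial has principal part \(\tau m s^2(1+e^{-sT})\).

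Consequently, transversally to \(\mathcal R\) the linearized unfiltered loop has spectral abscissa \(0\). No \(V\) with \(V\le c_2\operatorname{dist}^2\) and \(\dot V\le -cV\) can exist, and your fallback yields only boundedness.

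\paragraph{Relation to the paper's proof.} The paper's argument does not resolve this either. It gets convergence of the linearization from the dissipation inequality plus invariance. It then asserts that asymptotic and exponential convergence coincide for linear systems with constant delays. That is true for retarded equations but false for this neutral structure, so exponential stability with respect to \(\mathcal R\) is not attainable for the unfiltered channel.

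A uniform rate does become available once the wave-domain filters of Section~\ref{sec:filtering} are inserted. Then \(\tilde\sigma_p^-\) and \(\tilde\sigma_o^-\) are filter states, the high-frequency loop gain vanishes, and the system becomes retarded. In that setting both the paper's spectral argument and a strict Lyapunov--Krasovskii construction like yours become viable.
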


\begin{proof}
By Assumption~\ref{ass:convex-feasible}, strict complementarity implies that,
after restricting the analysis to a sufficiently small neighborhood of
\(z^\star\), the active box constraints and active line-flow constraints are
unchanged. Hence the projected dynamics reduce locally to a smooth system on
the free generator coordinates and the active multiplier coordinates,
interconnected with the same finite constant-delay scattering channels.

Theorem \ref{thm:raw-delayed-scattering} characterizes
the zero-dissipation invariant set of the nominal delayed interconnection as
the KKT equilibrium set. The same characterization applies to the local
linearization, because the linearized system preserves the same passive
plant-optimizer-channel interconnection and the same fixed active set.
Therefore, the linearized dynamics converges to the local linearized
equilibrium set \cite{SepulchreStan2005}. By Lemma~\ref{lem:primal-unique}, the plant variables and the
primal optimizer variables are fixed on this set; the only possible freedom is
in the dual multipliers, and admissible dual variations must leave the same
primal KKT point unchanged. Thus, the only neutral directions are tangent to
\(\mathcal R\).

On the quotient space transverse to \(\mathcal R\), the local linearization is
an autonomous linear system with constant coefficients and finite constant
delays. For such a linear system, asymptotic convergence is equivalent to
exponential convergence \cite{Khalil2002NonlinearSystems}. Hence the transverse distance to \(\mathcal R\)
decays exponentially for the linearized dynamics. By the standard
linearization principle, after possibly shrinking the neighborhood, the
nonlinear nominal delayed closed-loop system is locally exponentially stable
with respect to \(\mathcal R\).
\qed
\end{proof}

For the subsequent RBC implementation and its convergence analysis, we next
pass to a local sampled-data description of the nominal delayed closed loop.
The same equilibrium set \(\mathcal R\) is used throughout this local
discrete-time analysis, and all distances and neighborhoods below are taken
with respect to this set. By Proposition~\ref{prop:nominal-set-exp} and the
converse Lyapunov theorem \cite{Khalil2002NonlinearSystems}, after restricting the analysis to a sufficiently
small neighborhood of \(\mathcal R\), there exist a local Lyapunov function
\(V\) and constants \(a_1,a_2,a_3>0\) such that, for every nominal sampled
trajectory initialized in this neighborhood,
\begin{equation}
a_1\gamma_k^2
\le
V_k
\le
a_2\gamma_k^2,
\label{eq:V-flow-bounds}
\end{equation}
and
\begin{equation}
V_{k+1}-V_k
\le
-a_3\varepsilon\gamma_k^2
\label{eq:V-flow-decay}
\end{equation}
for all sufficiently small \(\varepsilon\), where \(\gamma_k\) denotes the
distance from the current sampled closed-loop state to \(\mathcal R\).
In what follows, \(\varepsilon\) is used as the discrete step size.

\section{RBC Implementation and Analysis}
\label{sec:rbc}

\subsection{RBC implementation}

In a digital implementation of \eqref{eq:optimizer}, updating all control
variables at every sampling instant may be computationally burdensome in
large-scale networks. We therefore use an RBC
implementation of the cyber optimizer. The physical plant and the scattering
interface are unchanged; only the optimizer update is replaced by a sampled
block update. Thus, the RBC scheme is a computationally cheaper realization of
the same closed-loop controller, rather than a different steady-state control
law.

Define the optimizer state
\(z:=\operatorname{col}(u,\phi,\lambda,\pi,\rho^\pm)\). The continuous-time
optimizer can be written compactly as \(\dot z=f(z,y)\), where \(f\) denotes
the vector field induced by \eqref{eq:optimizer}. Let \(t_k=k\varepsilon\) and
\(y_k:=y(t_k)\). Partition \(z\) into \(n_b\) coordinate blocks as
\(z=\operatorname{col}(z^{(1)},\dots,z^{(n_b)})\), and let \(E_\beta\) denote
the coordinate projector associated with block \(\beta\), with
\(\sum_{\beta=1}^{n_b}E_\beta=I\). At time \(t_k\), a block index \(\beta_k\)
is drawn according to \(\mathbb P(\beta_k=\beta)=p_\beta\), where
\(p_\beta>0\) and \(\sum_{\beta=1}^{n_b}p_\beta=1\). The implemented RBC
update is
\begin{equation}
\label{eq:rbc-candidate-delayed}
z_{k+1}
=
\mathcal P_\mathcal K
\bigl(z_k+\varepsilon p_{\beta_k}^{-1}E_{\beta_k}f(z_k,y_k)\bigr),
\end{equation}
where \(\mathcal K:=\Omega\times\1_n^\perp\times\R^n\times\R^{n_t}\times\R_{\ge0}^m\times\R_{\ge0}^m\) is the Cartesian product of the feasibility sets of \eqref{eq:optimizer}. Hence, only one optimizer block is updated at each sampling
instant, while feasibility is preserved by projection. In a sufficiently small neighborhood of \(\mathcal R\),
\(\mathcal P_{\mathcal K}\) acts as the identity on the reduced local update
and can therefore be omitted in the local convergence analysis.

Let \(\mathcal F_k\) denote the history generated by the previous block
selections and the corresponding sampled closed-loop states up to time \(t_k\),
and let \(\E_k[\cdot]:=\E[\cdot\mid\mathcal F_k]\). Since
\(\sum_{\beta=1}^{n_b}E_\beta=I\) and
\(\mathbb P(\beta_k=\beta)=p_\beta\), the scaling in
\eqref{eq:rbc-candidate-delayed} gives
\(\E_k[p_{\beta_k}^{-1}E_{\beta_k}]
=\sum_{\beta=1}^{n_b}p_\beta p_\beta^{-1}E_\beta=I\). Multiplying this identity
by \(f(z_k,y_k)\) gives the standard unbiasedness property summarized below.

\begin{lemma}[Conditional unbiasedness]
\label{lem:rbc-unbiased}
The RBC block direction satisfies
\begin{equation}
\begin{aligned}
\E_k\!\left[p_{\beta_k}^{-1}E_{\beta_k}\right]
&=
I,
\\
\E_k\!\left[p_{\beta_k}^{-1}E_{\beta_k}f(z_k,y_k)\right]
&=
f(z_k,y_k).
\end{aligned}
\label{eq:rbc-unbiased-delayed}
\end{equation}
\end{lemma}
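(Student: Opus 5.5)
The plan is to compute the conditional expectation directly over the finite distribution of the block index \(\beta_k\). The point that makes this work is that, conditioned on \(\mathcal F_k\), the sampled state \(z_k\) and the sampled measurement \(y_k\) are fixed. The state \(z_k\) is generated by the previous block selections and the closed-loop evolution up to \(t_k\). The measurement \(y_k\) is reconstructed at the optimizer from \(u_k\) and the received wave \(\sigma_o^-(t_k)=\sigma_p^+(t_k-d_\omega)\), and both of these are determined by the history up to \(t_k\). The selection \(\beta_k\), on the other hand, is drawn at time \(t_k\) with \(\mathbb P(\beta_k=\beta\mid\mathcal F_k)=p_\beta\). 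I will take this as the modeling convention, namely that the draw is independent of the past and made afresh at each sample.

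For the first identity, the random matrix \(p_{\beta_k}^{-1}E_{\beta_k}\) takes only the finitely many values \(p_\beta^{-1}E_\beta\), \(\beta=1,\dots,n_b\). Each of these is well defined because \(p_\beta>0\). Hence
\[
\E_k\!\left[p_{\beta_k}^{-1}E_{\beta_k}\right]=\sum_{\beta=1}^{n_b}p_\beta\,p_\beta^{-1}E_\beta=\sum_{\beta=1}^{n_b}E_\beta=I,
\]
where the last equality uses the partition property of the coordinate projectors.

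For the second identity, I use that \(f(z_k,y_k)\) is \(\mathcal F_k\)-measurable. It can therefore be pulled out of the conditional expectation, acting by right multiplication, and the first identity then gives \(f(z_k,y_k)\).

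The only step that needs care is the measurability of \(y_k\), since the measurement involves delayed signals. Because the delays are causal, \(y_k\) depends only on past waves. I will state explicitly that \(\mathcal F_k\) contains the sampled interface signals up to time \(t_k\), so that \(y_k\) is known given \(\mathcal F_k\). With that convention, the rest of the argument is a finite sum.
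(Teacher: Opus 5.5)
Your proposal is correct and follows the paper's argument. The paper likewise evaluates \(\E_k[p_{\beta_k}^{-1}E_{\beta_k}]=\sum_{\beta=1}^{n_b}p_\beta p_\beta^{-1}E_\beta=I\) via the partition of identity and then multiplies by \(f(z_k,y_k)\); your explicit conditions, that \(\beta_k\) is drawn independently of \(\mathcal F_k\) and that \(y_k\) (hence \(f(z_k,y_k)\)) is \(\mathcal F_k\)-measurable despite the delays, are exactly what the paper uses tacitly.
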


\subsection{Small-perturbation analysis}

We now analyze the RBC sampled trajectory inside the local neighborhood of $\mathcal{R}$. In this neighborhood,
the optimizer vector field is locally Lipschitz and vanishes on
\(\mathcal R\). Consequently, one RBC step differs from the corresponding
nominal sampled step only by a higher-order random implementation error. The
following lemma records this perturbation effect directly at the Lyapunov
level.

\begin{lemma}[One-step RBC estimate]
\label{lem:rbc-perturbation}
Shrinking the local neighborhood if necessary, there exists a constant
\(C_s>0\) such that every RBC sampled trajectory in this neighborhood satisfies
\begin{equation}
\E_k[V_{k+1}]
\le
V_k
-a_3\varepsilon\gamma_k^2
+
C_s\varepsilon^2\gamma_k^2
\label{eq:rbc-one-step-error}
\end{equation}
for all sufficiently small \(\varepsilon\).
\end{lemma}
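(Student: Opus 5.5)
We compare one RBC step with the nominal sampled step started from the same state. Let \(x_k\) denote the full sampled closed-loop state: optimizer state \(z_k\), plant state, and the delay-line contents. The RBC update \eqref{eq:rbc-candidate-delayed} and the nominal sampled step share the plant and channel evolution over \([t_k,t_{k+1}]\). Since \(\mathcal P_{\mathcal K}\) acts as the identity locally, the optimizer components differ only through the increment. Define the random deviation
\[
e_k:=\varepsilon\bigl(p_{\beta_k}^{-1}E_{\beta_k}-I\bigr)f(z_k,y_k),
\]
so that \(x_{k+1}^{\mathrm{RBC}}=x_{k+1}^{\mathrm{nom}}+\Gamma e_k\). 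Here \(\Gamma\) embeds the optimizer deviation into the full state; the effect of \(e_k\) on the plant within the same interval enters through the downlink wave and is itself \(O(\varepsilon)\cdot\|e_k\|\). By Lemma~\ref{lem:rbc-unbiased}, \(\E_k[e_k]=0\).

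Next we need a second-moment bound. In the neighborhood, \(f\) is locally Lipschitz and vanishes on \(\mathcal R\), where \(y=0\) and \(z\) is a KKT point. Hence \(\|f(z_k,y_k)\|\le L_f\gamma_k\), taking the distance to \(\mathcal R\) in the full state so that \(\|y_k\|\) is also controlled by \(\gamma_k\). Since \(\|p_\beta^{-1}E_\beta-I\|\le p_{\min}^{-1}+1\), this gives
\[
\E_k\|e_k\|^2\le \varepsilon^2(p_{\min}^{-1}+1)^2L_f^2\gamma_k^2.
\]
The same Lipschitz estimate gives \(\gamma(x_{k+1}^{\mathrm{nom}})\le(1+c\varepsilon)\gamma_k\).

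Now expand \(V\) at \(x_{k+1}^{\mathrm{nom}}\). The converse Lyapunov function supplied by Proposition~\ref{prop:nominal-set-exp} can be taken \(C^1\) with Lipschitz gradient and \(\|\nabla V(x)\|\le a_4\gamma(x)\) near \(\mathcal R\). Then
\[
V(x_{k+1}^{\mathrm{RBC}})\le V(x_{k+1}^{\mathrm{nom}})+\nabla V(x_{k+1}^{\mathrm{nom}})^\top\Gamma e_k+\tfrac{L_V}{2}\|\Gamma e_k\|^2 .
\]
The point \(x_{k+1}^{\mathrm{nom}}\) is \(\mathcal F_k\)-measurable, so taking \(\E_k\) kills the linear term by unbiasedness. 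The quadratic term is at most \(\tfrac{L_V}{2}\|\Gamma\|^2\varepsilon^2(p_{\min}^{-1}+1)^2L_f^2\gamma_k^2\). Adding \eqref{eq:V-flow-decay} for the nominal step yields \eqref{eq:rbc-one-step-error}, with \(C_s:=\tfrac{L_V}{2}\|\Gamma\|^2(p_{\min}^{-1}+1)^2L_f^2\) plus any higher-order intra-interval cross terms.

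The main obstacle is justifying the regularity of \(V\), namely a smooth \(V\) with gradient and Hessian bounds, in the delayed, set-stable setting. I would handle it by working with the local linearization from Proposition~\ref{prop:nominal-set-exp}. There the active sets are frozen, so the projections disappear and the dynamics are smooth. On the quotient transverse to \(\mathcal R\), one can use a quadratic (Lyapunov–Krasovskii-type) functional, whose gradient is linear in the transverse state. Nonlinear remainders are \(O(\gamma_k^3)\) and are absorbed into \(C_s\varepsilon^2\gamma_k^2\) after shrinking the neighborhood. A secondary point is that the plant is driven during the interval by a \(u\) that is perturbed by \(e_k\). Because the plant and channel maps are Lipschitz over one interval, this contributes a zero-mean first-order term plus an \(O(\varepsilon^2\gamma_k^2)\) second-moment term, so it is absorbed in the same way.
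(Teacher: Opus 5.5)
Your argument is correct at the paper's level of rigor, but it organizes the estimate differently.

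The paper Taylor-expands $V$ about the current state $x_k$ along the RBC step. It uses Lemma~\ref{lem:rbc-unbiased} to replace the random first-order term by $\varepsilon\nabla V^\top f(z_k,y_k)$. It then Taylor-expands the full nominal sampled step about the same point to show that this first-order term is at most $-a_3\varepsilon\gamma_k^2+O(\varepsilon^2\gamma_k^2)$. So the paper matches the first-order terms of two expansions, and each expansion contributes its own second-order remainder.

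You instead write $x_{k+1}^{\mathrm{RBC}}=x_{k+1}^{\mathrm{nom}}+\Gamma e_k$ and expand about the $\mathcal F_k$-measurable point $x_{k+1}^{\mathrm{nom}}$. The linear term then vanishes exactly, \eqref{eq:V-flow-decay} is used verbatim, and the only remainder is the conditional second moment $\E_k\|\Gamma e_k\|^2$ of the block estimator. This buys you two things:
\begin{itemize}
\item $C_s$ measures purely the cost of randomization; it vanishes for a deterministic full update.
\item The plant and channel evolution shared by both steps over $[t_k,t_{k+1}]$ cancels by construction. The paper's expansion is written only along the optimizer direction $p_{\beta_k}^{-1}E_{\beta_k}f$ and leaves this point implicit.
\end{itemize}
Both routes need the same $C^{1,1}$-type regularity of $V$. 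The paper simply asserts it, so your flagging it as the main obstacle is not a weakness relative to the paper.

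Two small points to tighten. First, the first-order effect of $e_k$ on the plant and channel state is zero-mean because that map is \emph{linear} in $e_k$, not merely Lipschitz as your last sentence says. The linearity comes from the plant and scattering channel being linear and $\mathcal P_{\mathcal K}$ acting locally as the identity on the reduced update. The same linearity is what makes $\Gamma$ a fixed map, so that $\E_k[\Gamma e_k]=0$.

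Second, nonlinear remainders of order $O(\gamma_k^3)$ with no factor $\varepsilon$ cannot be absorbed into $C_s\varepsilon^2\gamma_k^2$. With a quadratic Lyapunov--Krasovskii functional, the expansion about $x_{k+1}^{\mathrm{nom}}$ is exact and produces no cubic term at all. Cubic terms arise only when establishing \eqref{eq:V-flow-decay} for the nonlinear nominal step. There they enter as $O(\varepsilon\gamma_k^3)$ and are absorbed into the nominal decrease by shrinking the neighborhood, at the price of a smaller $a_3$.
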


\begin{proof}
By Proposition~\ref{prop:nominal-set-exp}, the nominal sampled dynamics admit
the local Lyapunov estimate \eqref{eq:V-flow-bounds}--\eqref{eq:V-flow-decay}
in a sufficiently small neighborhood of the equilibrium set. After restricting
the analysis to this neighborhood, the active constraints remain unchanged,
the sampled closed-loop vector field is locally Lipschitz, and it vanishes on
the equilibrium set. Hence, along the sampled trajectory, the one-step
variation of the closed-loop state is \(O(\varepsilon\gamma_k)\).

We apply Taylor's formula to \(V\) along one RBC step. Note that \(V\) is twice
continuously differentiable in the chosen neighborhood and its second
derivative is locally bounded, the RBC update gives
\begin{equation}
    V_{k+1}=V_k+\varepsilon\nabla V^\top
p_{\beta_k}^{-1}E_{\beta_k}f(z_k,y_k)+O(\varepsilon^2\gamma_k^2),
\end{equation}
where
\(\nabla V\) is evaluated at the current sampled state. Taking conditional
expectation and using Lemma~\ref{lem:rbc-unbiased} yields
\(\E_k[V_{k+1}]
\le V_k+\varepsilon\nabla V^\top f(z_k,y_k)+O(\varepsilon^2\gamma_k^2)\).
The term \(O(\varepsilon^2\gamma_k^2)\) follows from
\(f(z_k,y_k)=O(\gamma_k)\) and the boundedness of
\(p_{\beta_k}^{-1}E_{\beta_k}\) under the fixed block probabilities.

The first-order term is the same as the first-order term of the full sampled
nominal update. Applying Taylor's formula to that full sampled update and
using the nominal decrease estimate \eqref{eq:V-flow-decay} gives
\begin{equation}
V_k+\varepsilon\nabla V^\top f(z_k,y_k)\le V_k-a_3\varepsilon\gamma_k^2+O(\varepsilon^2\gamma_k^2).
\end{equation}
Combining the preceding two estimates and increasing \(C_s\) if necessary
gives \eqref{eq:rbc-one-step-error}.
\qed
\end{proof}

\subsection{Mean-square stability of the RBC implementation}

Lemma~\ref{lem:rbc-perturbation} shows that the RBC sampling error is of order
\(O(\varepsilon^2\gamma_k^2)\), while the nominal local decrease is of order
\(O(\varepsilon\gamma_k^2)\). Thus, for sufficiently small step size, the
nominal decrease dominates the RBC sampling error.

\begin{theorem}[Mean-square stability]
\label{thm:rbc-closed-loop}
There exist \(\bar{\varepsilon}>0\), \(C>0\), \(\nu>0\), and a sufficiently
small neighborhood of \(\mathcal R\) such that, for any
\(0<\varepsilon<\bar{\varepsilon}\), each RBC sampled trajectory of \eqref{eq:rbc-candidate-delayed} remains in the local neighborhood where the
active constraints are unchanged and satisfies
\begin{equation}
\E[\gamma_k^2]
\le
C(1-\nu\varepsilon)^k\gamma_0^2,
\label{eq:rbc-mss}
\end{equation}
where, \(\gamma_0\) denotes the initial distance from the RBC sampled
closed-loop state to \(\mathcal R\).
\end{theorem}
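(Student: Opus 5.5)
The plan is to combine the one-step estimate of Lemma~\ref{lem:rbc-perturbation} with the sandwich bounds \eqref{eq:V-flow-bounds}. This gives a geometric contraction of $\E[V_k]$, which then transfers to $\E[\gamma_k^2]$. The contraction follows once the RBC trajectory is known to stay in the local neighborhood. The delicate part is that invariance step, because the RBC step is random.

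\textbf{Step 1 (contraction in one step).} Choose $\bar\varepsilon\le a_3/(2C_s)$. Then for $0<\varepsilon<\bar\varepsilon$, \eqref{eq:rbc-one-step-error} gives
\[
\E_k[V_{k+1}]\le V_k-\tfrac{a_3}{2}\varepsilon\gamma_k^2 .
\]
By \eqref{eq:V-flow-bounds}, $\gamma_k^2\ge V_k/a_2$, so $\E_k[V_{k+1}]\le (1-\nu\varepsilon)V_k$ with $\nu:=a_3/(2a_2)$. Shrinking $\bar\varepsilon$ if needed ensures $\nu\bar\varepsilon<1$.

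\textbf{Step 2 (iteration).} Take total expectations and use the tower property over $\mathcal F_k$. This yields $\E[V_k]\le(1-\nu\varepsilon)^kV_0$. The bounds \eqref{eq:V-flow-bounds} then give $\E[\gamma_k^2]\le a_1^{-1}\E[V_k]\le (a_2/a_1)(1-\nu\varepsilon)^k\gamma_0^2$. This is \eqref{eq:rbc-mss} with $C:=a_2/a_1$.

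\textbf{Step 3 (invariance of the neighborhood).} Steps 1--2 are only valid while every iterate stays in the neighborhood $\mathcal U$ where Lemma~\ref{lem:rbc-perturbation}, the fixed active set and $\mathcal P_\mathcal K=\mathrm{id}$ hold. Here I would exploit that the perturbation is multiplicative: the block direction $p_{\beta_k}^{-1}E_{\beta_k}$ is bounded by $p_{\min}^{-1}$ and $f$ vanishes on $\mathcal R$. As in the proof of Lemma~\ref{lem:rbc-perturbation}, this gives the \emph{almost sure} bound $\gamma_{k+1}\le (1+c\varepsilon)\gamma_k$. Together with the Lipschitz property of $V$, this yields the pathwise bound $V_{k+1}\le (1+c'\varepsilon)V_k$.

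\textbf{Step 4 (sublevel-set argument).} Pick a sublevel set $\{V\le c_1\}\subset\mathcal U$ and start with $V_0\le c_1/2$ (so $\gamma_0$ is small). Since $V_k$ is a nonnegative supermartingale, the maximal inequality gives $\mathbb P(\sup_k V_k\ge c_1)\le V_0/c_1$. This is small, but not zero, so it alone does not give sure invariance.

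\textbf{Step 5 (closing the gap).} The main obstacle is to obtain invariance for \emph{each} trajectory, as the theorem states. I would first try to use that one block update changes only a single coordinate block by $O(\varepsilon\gamma_k)$. I would also use that the local linearized vector field has a fixed active set. With these, I would argue, as in the linear-systems viewpoint of Proposition~\ref{prop:nominal-set-exp}, that on the reduced coordinates the RBC iteration is a random product of matrices $I+\varepsilon p_\beta^{-1}E_\beta A$. This is where the step may need the quadratic Lyapunov function of the linearization and a sufficiently small initial neighborhood.

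\textbf{Step 6 (fallback).} If sure invariance cannot be obtained this way, I would stop the process at the exit time of $\mathcal U$. I would prove \eqref{eq:rbc-mss} for the stopped process, which gives the estimate on the event that the trajectory stays in $\mathcal U$. I would then make the exit probability arbitrarily small by shrinking the initial neighborhood, via the supermartingale bound of Step 4. In that case I would state that the invariance claim in the theorem holds with high probability rather than surely.
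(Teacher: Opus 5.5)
Your Steps 1--2 are exactly the paper's proof. The paper makes the same choice $C_s\bar\varepsilon\le a_3/2$ and derives the same contraction $\E_k[V_{k+1}]\le(1-\nu\varepsilon)V_k$ with $\nu=a_3/(2a_2)$. It then uses the same tower-property recursion and obtains $C=a_2/a_1$. The paper stops there. The claim that every RBC trajectory stays in the neighborhood is only asserted in the statement. It is used implicitly each time Lemma~\ref{lem:rbc-perturbation} is applied at step $k$. So the issue you raise in Steps 3--6 is a real gap in the argument, and the paper does not resolve it.

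Your Step 5 cannot close this gap, because invariance for every realization (or even almost every realization) is false in general. Take a partition containing a block $\beta$ made only of $\lambda$-coordinates (or only of $\pi$-coordinates). The corresponding part of $f$ is $\tau_\lambda^{-1}r(u,\phi)$, which depends neither on $\lambda$ nor on $y$, and $\mathcal P_{\mathcal K}$ does not act on $\lambda$. Consider the event that $\beta$ is drawn $K$ times in a row, which has probability $p_\beta^K>0$. On this event $u$ and $\phi$ stay frozen, and the $\beta$-coordinates of $\lambda$ move by $K\varepsilon p_\beta^{-1}$ times the corresponding entries of $\tau_\lambda^{-1}r(u_0,\phi_0)$, i.e.\ linearly in $K$. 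In your matrix picture, $E_\beta A E_\beta=0$, so $I+\varepsilon p_\beta^{-1}E_\beta A$ is a nontrivial shear with unbounded powers. Hence no norm or quadratic Lyapunov function can be non-increasing along all realizations. For initial states whose residual $r(u_0,\phi_0)$ is nonzero on the coordinates of $\beta$, the trajectory therefore leaves any fixed small neighborhood with positive probability. This holds generically, for instance whenever the multipliers are unique, since then $\mathcal R$ is a point.

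Your Step 6 is therefore not a fallback but the correct formulation. Set $\tau:=\inf\{k: V_k\ge c_1\}$, with $c_1$ small enough that one step from $\{V<c_1\}$ cannot leave $\mathcal U$; this is where your pathwise bound from Step 3 is needed. The nonnegative supermartingale is the stopped process $V_{k\wedge\tau}$, not $V_k$, and it gives $\mathbb P(\tau<\infty)\le V_0/c_1\le a_2\gamma_0^2/c_1$. From $\E_k[V_{k+1}\mathbf 1\{\tau>k+1\}]\le \mathbf 1\{\tau>k\}\E_k[V_{k+1}]$ you then get $\E[\gamma_k^2\mathbf 1\{\tau>k\}]\le (a_2/a_1)(1-\nu\varepsilon)^k\gamma_0^2$. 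That is what the local argument actually supports. The unconditional bound \eqref{eq:rbc-mss} together with sure invariance, as stated in the theorem, does not follow from it.
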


\begin{proof}
Choose \(\bar{\varepsilon}>0\) such that
\(C_s\bar{\varepsilon}\le a_3/2\). Then, for any
\(0<\varepsilon<\bar{\varepsilon}\), the perturbation term $C_s\varepsilon^2\gamma_k^2$ in
\eqref{eq:rbc-one-step-error} is dominated by one half of the nominal
Lyapunov decrease. Hence
\begin{equation}
    \E_k[V_{k+1}]
\le
V_k-a_3\varepsilon\gamma_k^2
+C_s\varepsilon^2\gamma_k^2
\le
V_k-\frac{a_3}{2}\varepsilon\gamma_k^2 .
\end{equation}
This is the point where the nominal exponential decrease absorbs the
second-order RBC perturbation.
Using \eqref{eq:V-flow-bounds} gives $\E_k[V_{k+1}]
\le
V_k-\frac{a_3}{2a_2}\varepsilon V_k$.
Thus the conditional expected Lyapunov value contracts by a fixed factor at
each step. Taking total expectation and applying this estimate recursively
gives $\E[V_k]
\le
\left(1-\frac{a_3}{2a_2}\varepsilon\right)^k V_0 .$
With \(\nu:=a_3/(2a_2)\), this becomes $\E[V_k]\le (1-\nu\varepsilon)^kV_0 .$
Finally, the lower and upper bounds in \eqref{eq:V-flow-bounds} give
\(a_1\gamma_k^2\le V_k\) and \(V_0\le a_2\gamma_0^2\). Hence $\E[\gamma_k^2]
\le
\frac{1}{a_1}\E[V_k]
\le
\frac{a_2}{a_1}(1-\nu\varepsilon)^k\gamma_0^2 .$
This proves \eqref{eq:rbc-mss} with \(C=a_2/a_1\).
\qed
\end{proof}
\section{Passivity-Preserving Wave-Domain Interface Filtering}
\label{sec:filtering}

The preceding section developed a sampled-data RBC implementation that reduces
the per-step computational burden of the cyber layer while preserving the
target equilibrium set. The resulting blockwise updates, however, may introduce
fast variations in the optimizer-side wave variables. By
Lemma~\ref{lem:passivity-blocks}(iii), the delayed scattering channel
\eqref{eq:delay-lines} is lossless passive. It preserves the passivity
interconnection under constant delays, but does not attenuate such
high-frequency wave components. This motivates an additional interface
filtering mechanism.

To introduce extra
interface dissipation without changing the steady-state interconnection, we
place first-order filters in the wave domain. Specifically, we replace
\eqref{eq:delay-lines} by
\begin{subequations}\label{eq:filtered-wave-channel}
\begin{align}
a_u(t)&:=\sigma_o^+(t-d_u), \label{eq:filtered-wave-down-delay}\\
\zeta_u\dot \sigma_p^-(t)&=-\sigma_p^-(t)+a_u(t),
\label{eq:filtered-wave-down-filter}\\
a_\omega(t)&:=\sigma_p^+(t-d_\omega), \label{eq:filtered-wave-up-delay}\\
\zeta_\omega\dot \sigma_o^-(t)&=-\sigma_o^-(t)+a_\omega(t),
\label{eq:filtered-wave-up-filter}
\end{align}
\end{subequations}
where \(\zeta_u,\zeta_\omega>0\) are the filter time constants. Both filters
are first-order low-pass filters with unit DC gain. As a result, constant wave
signals are transmitted without steady-state distortion, and the equilibrium
wave interconnection is left unchanged. In the analysis below, we use the
corresponding incremental form of \eqref{eq:filtered-wave-channel}.

\begin{lemma}[Filtered channel passivity]
\label{lem:filtered-wave-channel}
Consider the filtered scattering channel defined by \eqref{eq:spm} and
\eqref{eq:filtered-wave-channel}. In incremental coordinates, let
\(\tilde a_u(t):=\tilde{\sigma}_o^+(t-d_u)\) and
\(\tilde a_\omega(t):=\tilde{\sigma}_p^+(t-d_\omega)\). Define
\[
\begin{aligned}
S_{\mathrm{ch}}
:={}&
\frac{1}{2}\int_{t-d_u}^{t}\norm{\tilde{\sigma}_o^+(\sigma)}^2\,d\sigma
+
\frac{1}{2}\int_{t-d_\omega}^{t}\norm{\tilde{\sigma}_p^+(\sigma)}^2\,d\sigma\\
&+
\frac{\zeta_u}{2}\norm{\tilde{\sigma}_p^-}^2
+
\frac{\zeta_\omega}{2}\norm{\tilde{\sigma}_o^-}^2.
\end{aligned}
\]
Then
\begin{equation}
\dot S_{\mathrm{ch}}
=
-(G^\top\tilde\omega)^\top\tilde p
+\tilde u^\top y
-\frac{1}{2}\norm{\tilde a_u-\tilde{\sigma}_p^-}^2
-\frac{1}{2}\norm{\tilde a_\omega-\tilde{\sigma}_o^-}^2.
\label{eq:filtered-wave-channel-passivity}
\end{equation}
Hence the filtered scattering channel is passive. Compared with the unfiltered
delayed channel, it introduces the additional dissipation terms
\(\frac{1}{2}\norm{\tilde a_u-\tilde{\sigma}_p^-}^2\) and
\(\frac{1}{2}\norm{\tilde a_\omega-\tilde{\sigma}_o^-}^2\).
\end{lemma}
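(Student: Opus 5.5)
The plan is a direct computation: differentiate \(S_{\mathrm{ch}}\) along the incremental form of \eqref{eq:filtered-wave-channel}, complete a square in each filter channel, and then reuse the scattering identities already established in the proof of Lemma~\ref{lem:passivity-blocks}(iii).

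\emph{Step 1: delay-line terms.} Because the equilibrium wave bias \(u^\star/\sqrt{2\eta}\) is constant, the incremental channel has the same form as the raw one. Thus \(\zeta_u\dot{\tilde\sigma}_p^-=-\tilde\sigma_p^-+\tilde a_u\) and \(\zeta_\omega\dot{\tilde\sigma}_o^-=-\tilde\sigma_o^-+\tilde a_\omega\). Differentiating the two integrals by the Leibniz rule gives
\(\frac12\bigl(\norm{\tilde\sigma_o^+}^2-\norm{\tilde a_u}^2\bigr)+\frac12\bigl(\norm{\tilde\sigma_p^+}^2-\norm{\tilde a_\omega}^2\bigr)\).
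The only change from the unfiltered case is that the delayed outputs \(\tilde a_u,\tilde a_\omega\) now appear in place of \(\tilde\sigma_p^-,\tilde\sigma_o^-\).

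\emph{Step 2: filter terms.} The quadratic filter storages give
\(\tilde\sigma_p^{-\top}(\tilde a_u-\tilde\sigma_p^-)+\tilde\sigma_o^{-\top}(\tilde a_\omega-\tilde\sigma_o^-)\).
For the downlink channel, combine \(-\frac12\norm{\tilde a_u}^2+\tilde\sigma_p^{-\top}\tilde a_u-\norm{\tilde\sigma_p^-}^2\) and complete the square. This yields \(-\frac12\norm{\tilde\sigma_p^-}^2-\frac12\norm{\tilde a_u-\tilde\sigma_p^-}^2\), and the uplink channel is handled identically. Collecting terms,
\(\dot S_{\mathrm{ch}}=\frac12(\norm{\tilde\sigma_p^+}^2-\norm{\tilde\sigma_p^-}^2)+\frac12(\norm{\tilde\sigma_o^+}^2-\norm{\tilde\sigma_o^-}^2)-\frac12\norm{\tilde a_u-\tilde\sigma_p^-}^2-\frac12\norm{\tilde a_\omega-\tilde\sigma_o^-}^2\).

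\emph{Step 3: port identities.} Finally, invoke the pointwise identities from \eqref{eq:spm}:
\(\frac12(\norm{\tilde\sigma_p^+}^2-\norm{\tilde\sigma_p^-}^2)=-(G^\top\tilde\omega)^\top\tilde p\) and \(\frac12(\norm{\tilde\sigma_o^+}^2-\norm{\tilde\sigma_o^-}^2)=\tilde u^\top y\).
This gives \eqref{eq:filtered-wave-channel-passivity}, and passivity follows since the last two terms are nonpositive.

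The one point needing care, and the step I expect to be the main obstacle, is Step 3. These identities are purely algebraic consequences of the wave definitions and hold at each instant independently of how \(\tilde\sigma_p^-,\tilde\sigma_o^-\) are generated. The reason is that \(p\) and \(y\) are decoded locally from the received (now filtered) waves, exactly as in the decoding remark of Section~\ref{sec:problem}. So the identities carry over unchanged, and only the channel dynamics, not the port relations, are modified.
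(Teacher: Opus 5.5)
Your proposal is correct and matches the paper's proof. Both differentiate the delay-line integrals and the filter storages along the incremental filter dynamics, absorb the cross terms via $b^\top(a-b)=\frac{1}{2}\norm{a}^2-\frac{1}{2}\norm{b}^2-\frac{1}{2}\norm{a-b}^2$ (your completion of the square), and conclude with the pointwise scattering identities from \eqref{eq:spm}; your remark that these identities survive filtering because $p$ and $y$ are decoded locally from the received filtered waves correctly spells out a point the paper uses implicitly.
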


\begin{proof}
Since the equilibrium waves are constant, \eqref{eq:filtered-wave-channel}
gives
\(\zeta_u\dot{\tilde{\sigma}}_p^-=-\tilde{\sigma}_p^-+\tilde a_u\) and
\(\zeta_\omega\dot{\tilde{\sigma}}_o^-=-\tilde{\sigma}_o^-+\tilde a_\omega\).
Differentiating \(S_{\mathrm{ch}}\) along these incremental filter dynamics
gives
\[
\begin{aligned}
\dot S_{\mathrm{ch}}
={}&
\frac{1}{2}\norm{\tilde{\sigma}_o^+}^2
-\frac{1}{2}\norm{\tilde a_u}^2
+
\frac{1}{2}\norm{\tilde{\sigma}_p^+}^2
-\frac{1}{2}\norm{\tilde a_\omega}^2  \\
&+
\tilde{\sigma}_p^{-\top}(\tilde a_u-\tilde{\sigma}_p^-)
+
\tilde{\sigma}_o^{-\top}(\tilde a_\omega-\tilde{\sigma}_o^-).
\end{aligned}
\]
Using
\(b^\top(a-b)=\frac{1}{2}\norm{a}^2-\frac{1}{2}\norm{b}^2-\frac{1}{2}\norm{a-b}^2\)
yields
\[
\begin{aligned}
\dot S_{\mathrm{ch}}
={}&
\frac{1}{2}\bigl(\norm{\tilde{\sigma}_o^+}^2-\norm{\tilde{\sigma}_p^-}^2\bigr)
+
\frac{1}{2}\bigl(\norm{\tilde{\sigma}_p^+}^2-\norm{\tilde{\sigma}_o^-}^2\bigr)\\
&-\frac{1}{2}\norm{\tilde a_u-\tilde{\sigma}_p^-}^2
-\frac{1}{2}\norm{\tilde a_\omega-\tilde{\sigma}_o^-}^2.
\end{aligned}
\]
Rearranging the first two differences and using the scattering identities
\(\frac{1}{2}(\norm{\tilde{\sigma}_p^+}^2-\norm{\tilde{\sigma}_p^-}^2)=-(G^\top\tilde\omega)^\top\tilde p\) and
\(\frac{1}{2}(\norm{\tilde{\sigma}_o^+}^2-\norm{\tilde{\sigma}_o^-}^2)=\tilde u^\top y\)
gives \eqref{eq:filtered-wave-channel-passivity}.
\qed
\end{proof}
\begin{remark}
\label{rem:filtered-wave-transient}
In the sampled RBC
implementation, blockwise optimizer updates may generate
fast variations in \(\tilde u\), and therefore in the outgoing wave
\(\tilde{\sigma}_o^+\). The filters \eqref{eq:filtered-wave-channel} dissipate such variations through the mismatch
terms \(\norm{a_u-\tilde{\sigma}_p^-}^2\) and \(\norm{a_\omega-\tilde{\sigma}_o^-}^2\). Meanwhile, their
unit-DC-gain property preserves the steady-state wave
interconnection. This
explains why wave-domain interface filtering can improve large-delay transient
behavior without changing the target equilibrium.
\end{remark}
\section{Verification}
\label{sec:verification}

To verify the proposed plant-optimizer interconnection and its RBC digital
realization, we use the standard IEEE 14-bus transmission-network benchmark \cite{zimmermanMATPOWERSteadyStateOperations2011,matpower_manual}.
The network model is built on the \(100\)~MVA base. The controllable generation
vector contains the four non-slack units at buses \(2\), \(3\), \(6\), and
\(8\). The area of interest is chosen as buses \(1\)-\(5\), so the scheduled
transfer is the signed aggregate flow across the corridors \(4\text{-}7\),
\(4\text{-}9\), and \(5\text{-}6\). The quadratic dispatch cost is
\(J(u)=\frac{1}{2}(u-u^{\mathrm{ref}})^\top Q(u-u^{\mathrm{ref}})\)
with \(Q=\diag(3,5,6,7)\), and the pre-disturbance setpoint is
\(u^{\mathrm{ref}}=[40,\ 0,\ 0,\ 0]^\top\ \mathrm{MW}\).

A \(6\)~MW load increase is applied at \(t=5\)~s, split as \(60\%\) at bus~4
and \(40\%\) at bus~5. The tie-line schedule is fixed at its pre-disturbance
value \(87.7\)~MW. To activate a nontrivial congestion channel, the upper limit
of line \(2\text{-}4\) is tightened to \(55.65\)~MW, while the remaining line
limits are placed \(80\)~MW above their base flows. The RBC run uses
sampling period \(\varepsilon=0.0006\)~s and a uniform block activation over
the \(39\) cyber blocks induced by the centralized state partition.

Table~\ref{tab:rbc-computation} reports the optimizer workload: the RBC
implementation updates one block and \(1.87\) cyber coordinates per step on
average, amounting to \(2.56\%\) of the full sampled-update coordinate count
over the \(300\)~s horizon.

\begin{table}[t]
\centering
\caption{Cyber-state update counts over the \(300\)~s verification horizon at
\(\varepsilon=0.0006\)~s.}
\label{tab:rbc-computation}
\small
\resizebox{\linewidth}{!}{%
\begin{tabular}{lcccc}
\hline
Scheme & Blocks/step & Coord./step & Total updates & Relative load \\
\hline
Full sampled update & 39 & 73 & $3.65\times10^7$ & 100\% \\
RBC & 1 & 1.87 & $9.36\times10^5$ & 2.56\% \\
\hline
\end{tabular}
\par}
\end{table}

We consider two delayed plant-optimizer channels. The unfiltered scattering
interconnection uses impedance \(\eta=1.0\) and \(11\)~ms delay in each
direction, giving \(22\)~ms round-trip delay. The filtered case adds
first-order wave-variable filters with unit DC gain and time constants
\(10\)~ms and \(20\)~ms in the control and measurement directions,
respectively, and uses \(40\)~ms delay in each direction, giving \(80\)~ms
round-trip delay. In both cases, solving the post-disturbance steady-state
optimization problem yields the dispatch
\(u^\star=[38.52,\ 7.48,\ 0,\ 0]^\top\ \mathrm{MW}\), restores the tie-line
schedule to \(87.7\)~MW, and activates the upper line constraint on line
\(2\text{-}4\).

\begin{figure*}[t]
\centering
\begin{minipage}[t]{0.49\textwidth}
\centering
\includegraphics[width=\linewidth]{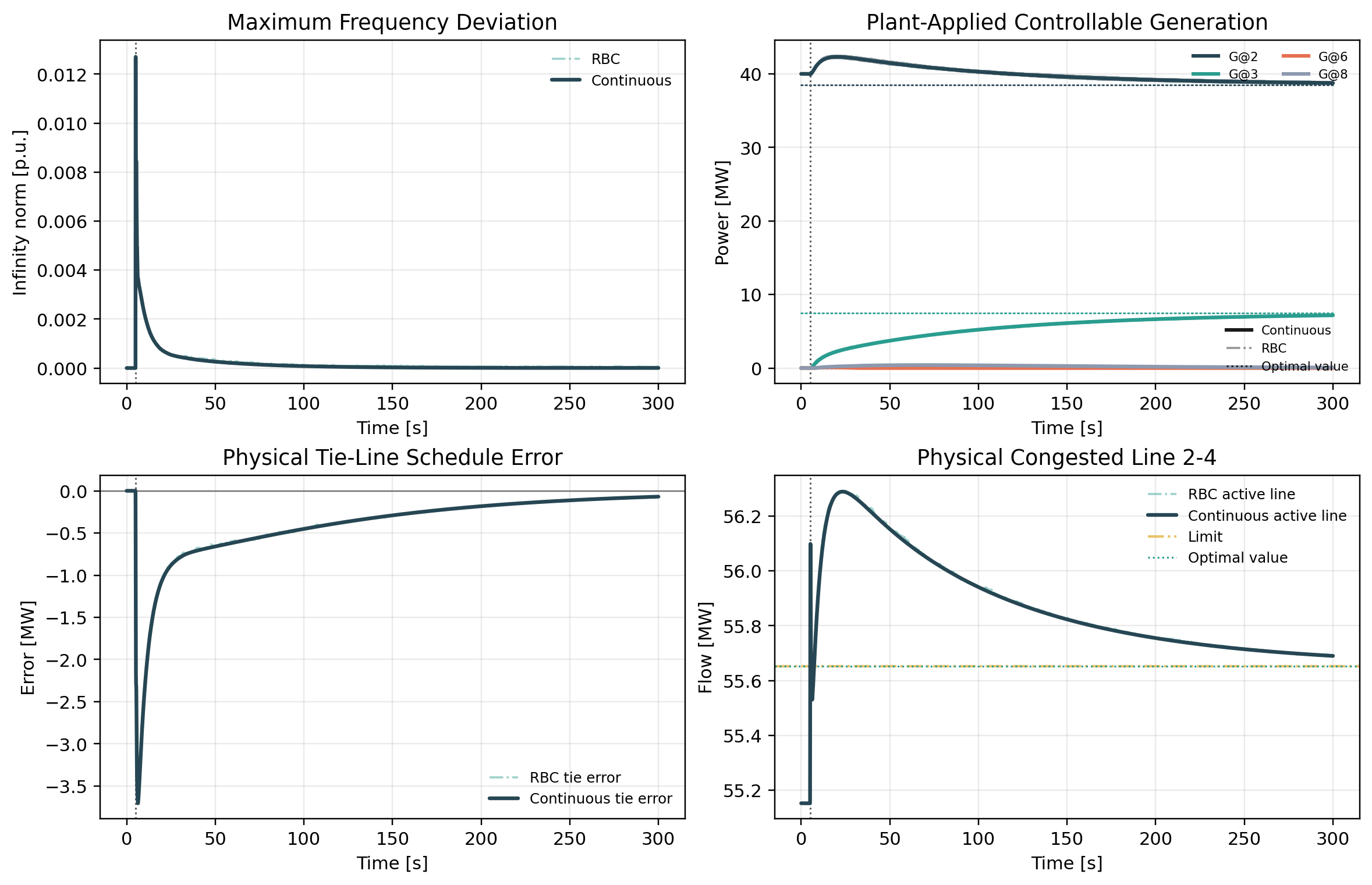}
\\[2pt]
\small (a) unfiltered, \(22\) ms round-trip delay
\end{minipage}\hfill
\begin{minipage}[t]{0.49\textwidth}
\centering
\includegraphics[width=\linewidth]{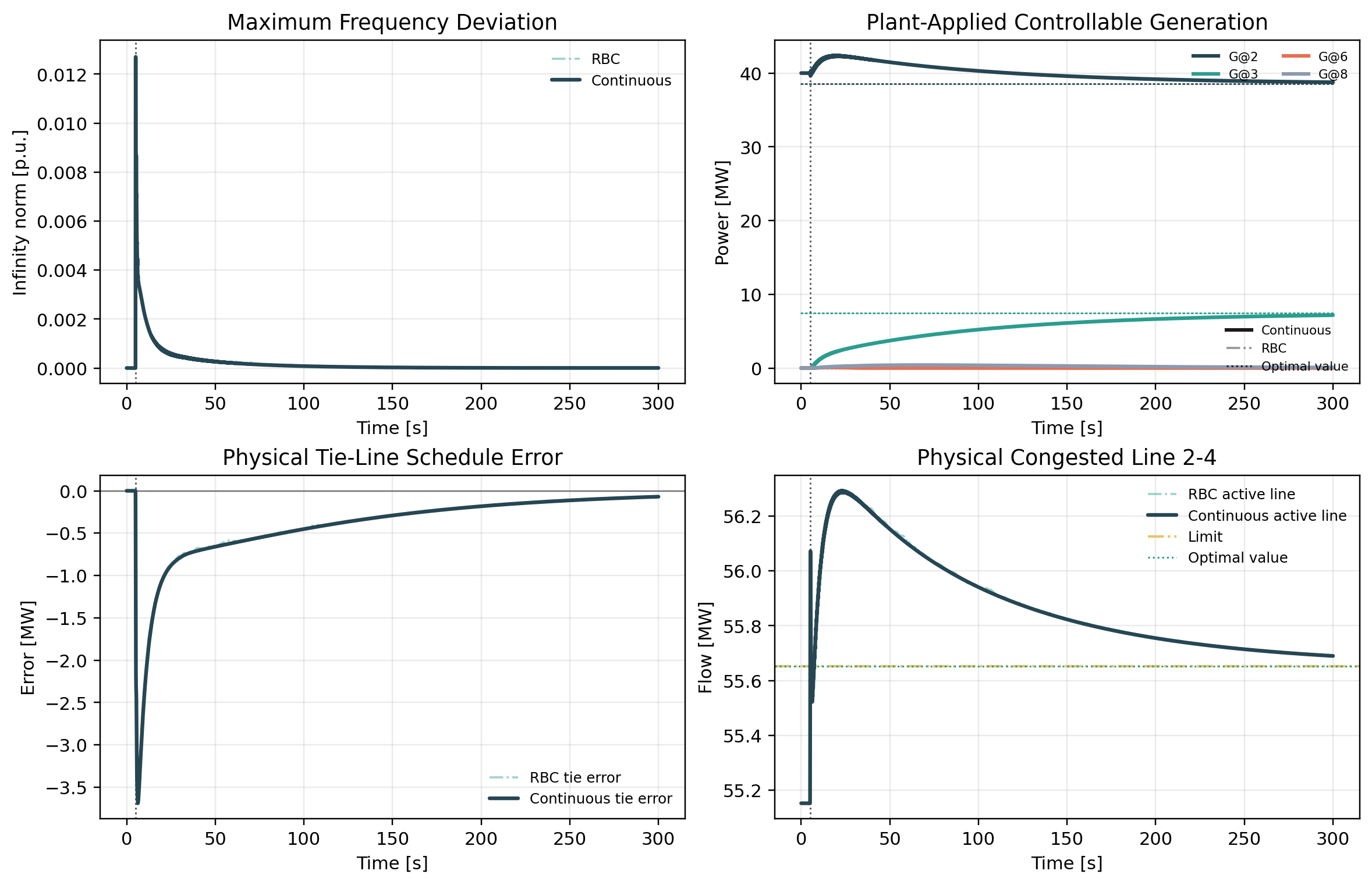}
\\[2pt]
\small (b) filtered, \(80\) ms round-trip delay
\end{minipage}
\caption{Verification of the IEEE 14-bus transmission-network benchmark.
Panel (a) verifies the unfiltered delayed scattering interconnection with
\(11\)~ms delay in each direction. Panel (b) verifies the wave-domain filtered
interface with \(40\)~ms delay in each direction and first-order filters with
\(10\)~ms and \(20\)~ms time constants. Both panels compare the continuous-time
interconnection and the RBC digital implementation.}
\label{fig:verification_ieee14_combined}
\end{figure*}

Fig.~\ref{fig:verification_ieee14_combined} shows that the RBC
implementation tracks the continuous-time delayed interconnection in both
cases, restores frequency and the scheduled tie-line exchange, and respects
the active line constraint at the constrained steady state. The wave-domain
filters allow the closed-loop system to maintain comparable performance under
a larger communication delay, while the RBC implementation achieves these
behaviors with much lower per-update cyber computation.
\section{Conclusion}
\label{sec:conclusion}
This paper developed a centralized secondary frequency-control framework that
simultaneously enforces tie-line schedule restoration, transmission congestion
constraints, global power balance, and generation limits under bidirectional
communication delays. The delayed plant-optimizer interconnection was modeled
through passive scattering channels, and wave-domain filtering was introduced
to add dissipation without changing the steady-state interconnection. To reduce
the central computational burden, we further developed an RBC sampled-data
implementation of the projected primal-dual controller. We showed that this
implementation preserves the target delayed equilibrium and achieves local
mean-square geometric stability under suitable regularity and step-size
conditions. Verification on the IEEE 14-bus system confirmed the predicted
closed-loop behavior and the computational reduction delivered by the RBC
implementation.
\bibliographystyle{unsrt}
\bibliography{references}

\end{document}